\newcommand{\mi}{\mathrm{i}}
\DeclareMathAlphabet{\mathcal}{OMS}{cmsy}{b}{n}
\DeclareMathAlphabet{\mathcal}{OMS}{cmsy}{m}{n}
\newtheorem{lemma}{\indent Lemma}
\newtheorem{proposition}{\indent Proposition}
\newtheorem{definition}{\indent Definition}
\newtheorem{example}{\indent Example}
\newtheorem{remark}{\indent Remark}
\newtheorem{problem}{\indent Problem}
\def\BibTeX{{\rm B\kern-.05em{\sc i\kern-.025em b}\kern-.08em
    T\kern-.1667em\lower.7ex\hbox{E}\kern-.125emX}}
\begin{document}
	\title{Generation of accessible sets in the dynamical modelling of quantum network systems*}
\author{Qi Yu, 
	Yuanlong Wang, Daoyi~Dong,
	~Ian~R.~Petersen,
	Guo-Yong Xiang
	
	\thanks{
		This work was supported by the Australian Research Council's Discovery Projects funding scheme under Projects DP190101566 and DP180101805, the U.S. Office of Naval Research Global under Grant N62909-19-1-2129 and the Air Force Office of Scientific Research and the Office of Naval Research Grants under agreement number FA2386-16-1-4065.}
		\thanks{Qi Yu and Daoyi Dong are with the School of Engineering and Information Technology, University of New South Wales, Canberra, ACT 2600, Australia (e-mail:vickivicky.qi.yu@gmail.com; daoyidong@gmail.com).}
	\thanks{Yuanlong Wang is with the Centre for Quantum Dynamics, Griffith University, Brisbane, QLD 4111, Australia and the School of Engineering and Information Technology, University of New South Wales, Canberra, ACT 2600, Australia (e-mail: yuanlong.wang.qc@gmail.com). }
	
	\thanks{Ian R. Petersen is with the Research School of Electrical, Energy and Materials Engineering, Australian National University, Canberra, ACT 2601, Australia (e-mail: i.r.petersen@gmail.com). }
	\thanks{Guo-Yong Xiang is with the CAS Key Laboratory of Quantum Information and CAS Center For Excellence in Quantum Information and Quantum Physics, University of Science and Technology of China, Hefei, China (e-mail: gyxiang@ustc.edu.cn). }
}

\maketitle

\begin{abstract}
In this paper, we consider the dynamical modeling of a class of quantum network systems consisting of qubits. Qubit probes are employed to measure a set of selected nodes of the quantum network systems. For a variety of applications, a state space model is a useful way to model the system dynamics. To construct a state space model for a quantum network system, the major task is to find an accessible set containing all of the operators coupled to the measurement operators. This paper focuses on the generation of a proper accessible set for a given system and measurement scheme. We provide analytic results on simplifying the process of generating accessible sets for systems with a time-independent Hamiltonian. Since the order of elements in the accessible set determines the form of state space matrices, guidance is provided to effectively arrange the ordering of elements in the state vector. Defining a system state according to the accessible set, one can develop a state space model with a special pattern inherited from the system structure. As a demonstration, we specifically consider a typical 1D-chain system with several common measurements, and employ the proposed method to determine its accessible set.

\end{abstract}

\begin{IEEEkeywords}
Quantum network system; dynamical modeling; accessible set; quantum system
\end{IEEEkeywords}

\section{INTRODUCTION}
The dynamical modeling of quantum systems is a basic task for a variety of quantum engineering problems such as quantum identification \cite{yuanlong2018Algorithm,Akira2017Hamiltonian,Wang2019Gate,Jun2015identifi,Pan2017Dark,Sone2018Quantify,Sone2017Exact,yuanlong2020,Guofeng2015realization,Burgarth2009indirect,Bonnabel2009Observer,Degen2017,qibo2017adaptive,Levitt2017identification}, quantum filtering \cite{qiyu2019TCST,Qing2016Coherent,qiyu2018SMC,qing2019design}, quantum control \cite{Cui2019Modeling,Daoyi2019Cybernetics,WS2016free,JS2011Ensemble,Xiang2017Performance,Kuang2017Rapid,Rebin2016Spatial,Shi2017Fault,Yanan2017Lyapunov,yuguo2019vanishing}. A good dynamical model can benefit the analysis of these problems. This paper studies the modeling of a class of quantum network systems whose element systems are qubits and the structure of the system Hamiltonians is given \cite{Yuzuru2014structure,Shu2017identify,	Ticozzi2015consensus,Ticozzi2014network,Guodong2015TAC}. The ultimate objective is to generate a state space model for a quantum network system subject to a measurement scheme. To find the state space equations for the system, a key task is to generate an accessible set of operators that are coupled with the measurement operators \cite{Yuan2006Reachable}. Once an accessible set is obtained, the system state vector consists of the expectation values of all operators in the accessible set. The state space equations can then be deduced given the state vector and the system Hamiltonian.

The generation of accessible sets is usually complicated. For most cases, the number of elements in an accessible set increases rapidly with the number of subsystems in a network system (See Fig. \ref{quantum Network}) \cite{Yuzuru2014structure}, and thus it may be difficult to search for numerical solutions in high-dimensional systems. Although one can always turn to a computer for solutions, the computational complexity can be high. Moreover, the ordering of the elements in the state vector is also nontrivial. Arranging a good ordering of elements in the system state variable may lead to state space matrices with a good structure. In the conference paper \cite{Qi2019SMC}, preliminary results have been presented in searching for a rapid method for the generation of accessible sets. This paper aims at presenting a comprehensive investigation on obtaining good accessible sets while simplifying the generation process. The specific definition of ``good'' is to give to a state space matrix that is easy to analyze and has a repetition pattern as the qubit number increases. 

 We first generalize the generation rules to achieve a lower computational complexity. Then we provide several lemmas and propositions to further reduce the computational complexity for a class of spin chain systems. We employ graphs to describe the generation of accessible sets. The graph method is a powerful tool for the demonstration of generation processes. We prove that the generation of accessible sets can be decomposed as the generation of a series of subsets for a class of quantum chain systems. The division of graphs can help in revealing the repetition pattern of the state space matrices. Graphs can also provide a guidance for the ordering of elements in the state vector. A state space model for the quantum network system can be immediately obtained given the corresponding accessible set. 

The structure of this paper is as follows: Section \ref{Sec2} formulates the problem. Section \ref{Sec3} presents our main results. A series of illustrative examples are given in Section \ref{Sec4}. Section \ref{Conclusion} concludes this paper.


\section{Problem Formulation}\label{Sec2}

\subsection{State space equations and accessible sets}

Measurement is often needed to extract information about a quantum network system. However, limited by experimental devices, it is common that only part of the network system can be measured in many practical applications (See Fig. \ref{quantum Network}). For example, one can measure one or two nodes at one edge of the network system to infer information about the whole system. 

\begin{figure}	
	\centering		
	\includegraphics[width=5.5cm]{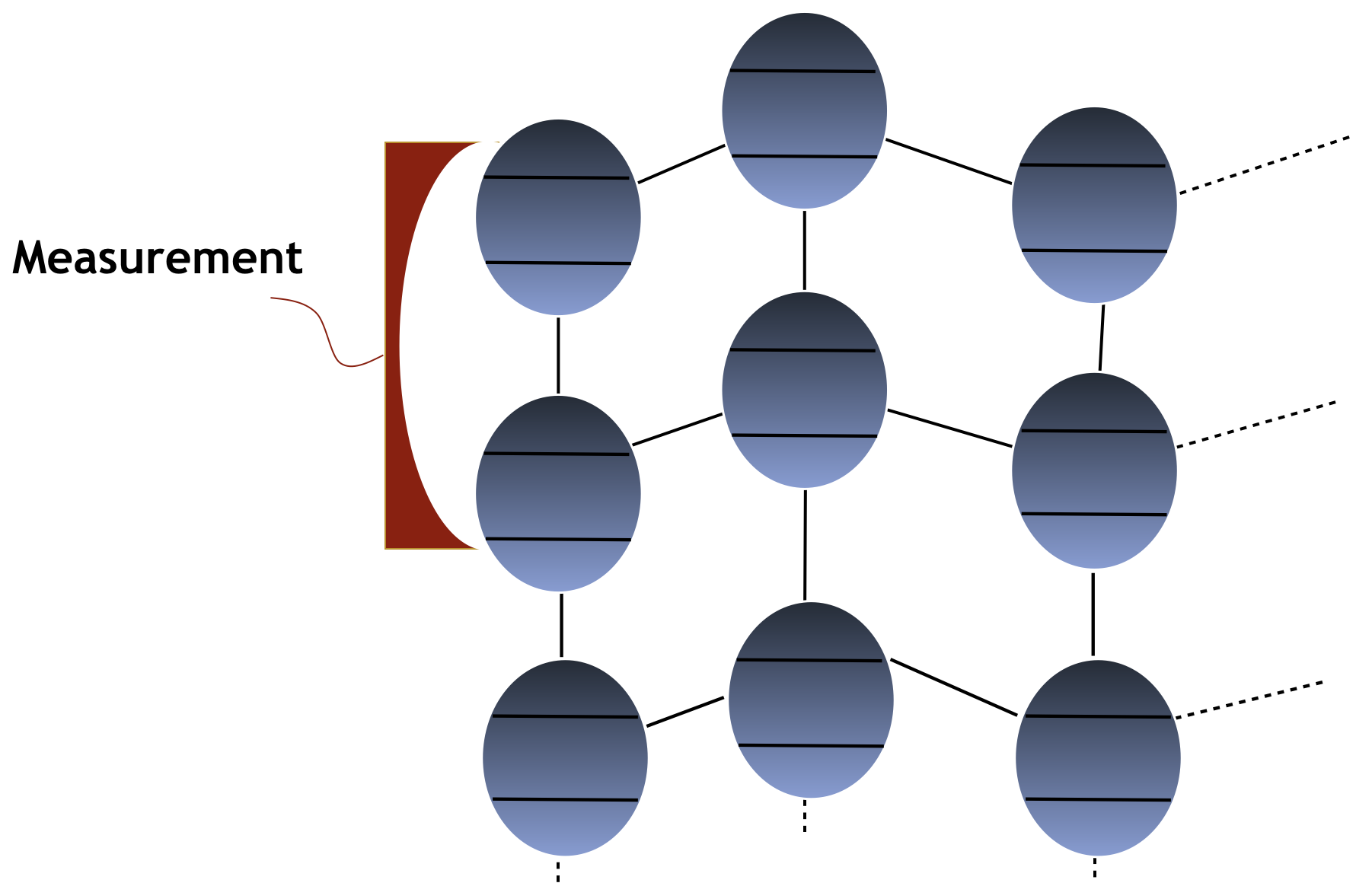}		
	\caption{An example of a quantum network system. The nodes are qubits and a connecting line indicates coupling between two qubits. A measurement device is employed to measure two nodes at an edge of the system.}
	\label{quantum Network}		
\end{figure}

Given $H$ as the time-independent system Hamiltonian, the time evolution of an arbitrary system observable $O(t)$ in the Heisenberg picture is 
\begin{equation}\label{eq01timevolution}
O(t)=U^\dagger(t)OU (t)
\end{equation}
where $U(t)$ is the unitary operator with time evolution 
\begin{equation}
U(t)=e^{-\mi Ht}.
\end{equation}
Here, $\mi$ is the imaginary unit (we have set $\hbar=1$). 
Taking derivative of both sides of \eqref{eq01timevolution}, we have 
\begin{equation}\label{Sfunction}
\frac{dO(t)}{dt}=\mi[H,O(t)].
\end{equation}
Given a measurement operator $M(0)=M$, its time evolution is 
\begin{equation}
M(t)=e^{\mi Ht}Me^{-\mi Ht}.
\end{equation}
According to the Baker-Hausdorff Lemma 
\cite{Sakurai2005modern}, the Taylor series of $M(t)$ is
\begin{equation}\label{MeasureEvolution}
\begin{split}
M(t)=&M+[H,M]\mi t+[H,[H,M]]\frac{(\mi t)^2}{2!} \\
&+[H,[H,[H,M]]]\frac{(\mi t)^3}{3!}+\ \cdots
\end{split}
\end{equation}
According to \eqref{MeasureEvolution}, the time derivatives of the measurement operator $M(t)$ are given
\begin{equation}\label{derivitives}
\mi[H,M],\  -[H,[H,M]],\  -\mi[H,[H,[H,M]]],\  \cdots .
\end{equation}
Since a state space model only contains first order derivatives of the elements of the state vector, we need to find a set $G$ of basis operators for the derivatives given in \eqref{derivitives}. We refer to the set $G$ as the \textit{accessible set} corresponding to the measurement $M$ since all of the element operators in $G$ are accessible by the measurement $M$. In other words, $G$ is a set of operators whose dynamics are coupled with $M$. A set of rules to generate the accessible sets is given in \cite{junzhang2014HIdentifi}.

Suppose the set $G$ has already been obtained and is given as follows:
\begin{equation}
G=\{O_1,\  O_2, \ O_3,\  \cdots, \  O_{N_o}\}
\end{equation}
where $N_o$ is the number of operators in $G$. We then summarize the process of generating the state space model when given an accessible set. We define the system state vector $\mathbf{x}$ as
\begin{equation}\label{stateformulation}
\mathbf{x}=(\hat{O}_1, \   \hat{O}_2, \  \hat{O}_3, \  \cdots, \  \hat{O}_{N_o})^T,
\end{equation}
where $O_k$ is the $k$-th operator in $G$ and $\hat{O}_k=\text{Tr}(O_k\rho)$ is the expectation of observable $O_k$.  As shown in \cite{junzhang2014HIdentifi}, the following state space equations can be employed to describe the dynamics of the system state $\mathbf{x}$ and the measurement $\mathbf{y}=\hat{M}$
\begin{equation}\label{statespaceeq}
\left\{
\begin{array}{ll}
\dot{\mathbf{x}} &=A\mathbf{x}+B\mathbf{x_0}, \\
\mathbf{y} &=C\mathbf{x},
\end{array}
\right.
\end{equation}
where $A$, $B$ and $C$ are coefficient matrices that can be obtained using \eqref{Sfunction}.

We divide the task of deriving a state space model in \eqref{statespaceeq} for a quantum network system into two parts: The first is to find an accessible set so as to define a state vector $\mathbf{x}$; The second is to find the coefficient matrices $A$, $B$ and $C$ once the state vector $\mathbf{x}$ is determined. The matrix $A$ can be calculated using \eqref{Sfunction}, $B$ depends on the initial state, $C$ depends on the measurement operators and $\mathbf{x}_0$ is the initial state. In this paper, we mainly focus on the first step since the second step is straightforward after obtaining a proper accessible set.

 Although accessible set is important for constructing state space model, the generation of the accessible set is not easy except for systems with simple coupling structures and special measurement schemes. For general cases, the difficulty of generating accessible sets increases rapidly with the number of qubits in the network system. Moreover, note that the ordering of the elements forming the state $\mathbf{x}$ in \eqref{stateformulation} determines the structure of the matrices $A$, $B$ and $C$. A good ordering should have the following properties:
\begin{enumerate}
	\item The matrix $A$ has a structure that can simplify further analysis.
	\item The matrix $A$ possesses a repetition pattern which is straightforward to extend when the number of qubits in the quantum network system increases.
\end{enumerate}
In this paper, we mainly study the generation of accessible sets. Our goal is to simplify the generation processes given in \cite{junzhang2014HIdentifi} while obtaining a good ordering for elements in the state vector $\mathbf{x}$.

\subsection{Problem formulation}
We assume that the Hamiltonian of a quantum network system consisting of $N$ qubits takes the following form:
\begin{equation}\label{hamiltonian}
H=\sum_{k=1}^{N_f} h_k H_k,
\end{equation}
where $\{H_k\}$ are Hermitian operators depending on the way the qubits coupled with each other and $\{h_k\}$ are coupling strengths. $N_f$ is the number of unknown parameters. Let $\digamma$ denote the set of operators constructing the Hamiltonian, which takes the following form
\begin{equation}\label{Hset}
\digamma = \{ H_1, \  H_2,\  H_3,\   \cdots,\  H_k, \  \cdots, \  H_{N_f} \}.
\end{equation}
We call $\digamma$ the \textit{Hamiltonian set} of $H$.


We generalize $M$ to be a set for all applicable measurement operators as
\begin{equation}\label{MeasurementSet}
M=\{O_{1}, \ O_{2},\  \cdots\}.
\end{equation} 
Given $M$ and the Hamiltonian $H$, to find the accessible set $G$, we set the initial accessible set as $G_0=M$. Then, we iteratively update the accessible set using the following rule until saturated \cite{junzhang2014HIdentifi}:
\begin{equation}\label{iterativeAS}
	G_m=\llbracket G_{m-1},\digamma \rrbracket \cup G_{m-1},
\end{equation}
where 
\begin{equation}\label{RuleForAS}
	\llbracket G_{m-1},\digamma \rrbracket=\{O_j|\text{Tr}(O_j^\dagger [\tau,\nu])\neq 0, \exists \tau \in G_{m-1}, \nu \in \digamma, O_j \in \Lambda \}.
\end{equation}
We use $\Lambda$ to represent a complete basis set for Hermitian operators of a qubit network system. For an $N$-qubit system, the number of operators in $\Lambda$ is $3^N$. The generation rule \eqref{RuleForAS} indicates that, finding an accessible set involves finding all of the operators coupled with the measurement operators in \eqref{MeasurementSet}. 

The following definition is used for a concise presentation.
\begin{definition}
	Given a triplet $\{\Lambda,\digamma, M\}$, where $\digamma$ is a Hamiltonian set and $M$ is a measurement set, the function $f$ is defined as
	$f:\{\Lambda,\digamma, M\} \rightarrow G$ where $G$ is the accessible set generated by the triplet $\{\Lambda,\digamma, M\}$.
\end{definition}

We formulate our problem as follows:
\begin{problem}\label{initialProblem}
	Let $G=f(\Lambda,\digamma, M)$ where $\digamma$ is the Hamiltonian set given in \eqref{Hset} and $M$ is the measurement set given in \eqref{MeasurementSet}, we aim to develop an economic method to simplify the generation of the accessible set $G$ with a good ordering according to generation rules \eqref{iterativeAS} and \eqref{RuleForAS}. 
\end{problem}
In Problem \ref{initialProblem}, the set $\Lambda$ scales exponentially. Thus, an algorithm can be time-consuming since it may require a full search of $\Lambda$, accompanying a high probability to yield an accessible set with an unsatisfactory ordering. Our study aims to investigate Problem \ref{initialProblem} for generating a good accessible set efficiently.

\section{Main results}\label{Sec3}

In this section, we first simplify the generation rules \eqref{iterativeAS} and \eqref{RuleForAS} to reduce the computational complexity. We then propose a method to achieve a good ordering for accessible sets. We also provide several lemmas and propositions that can help the calculation.

\subsection{Regarding the computational complexity}\label{SecComplex}
Define $\Omega$ as the set of all operators that are the tensor product of $N$ Pauli matrices and the identity. We have
\begin{equation}\label{Omega}
\Omega=\{O|O=\sigma_{i_1}\otimes\sigma_{i_2}\  \cdots\  \otimes\sigma_{i_k}\otimes\  \cdots\  \sigma_{i_N}\}
\end{equation}
where $\otimes$ denotes tensor product, $ i_k \in \{0,1,2,3\}$ and 
\begin{equation}\label{pauli}
\begin{split}
\sigma_0&\coloneqq  I_{2\times 2}, \qquad\qquad\qquad\ 
\sigma_1\coloneqq\sigma_x=\left(
\begin{array}{cc}
0 & 1\\
1 & 0\\
\end{array}\right),\\
\sigma_2&\coloneqq\sigma_y=\left(
\begin{array}{cc}
0 & -i\\
i & 0\\
\end{array}\right), \quad
\sigma_3\coloneqq\sigma_z=\left(
\begin{array}{cc}
1 & 0\\
0 & -1\\
\end{array}\right).
\end{split}
\end{equation}
We also have the equality $\sigma_{i_k}^2=I$.

The set $\Omega$ is an unnormalized basis set of the operator space for the network system. For the rest of the paper, we work with the set $\Omega$ rather than $\Lambda$ for the generation of accessible sets.

\begin{definition}\label{decomposition}
	An operator set $\bar{S}\subset \Omega$ is defined as the decomposed set of $S$, if $\bar{S}$ is a minimal basis set of $S$.
\end{definition}
\begin{remark} Any measurement set $M$ can be decomposed to a corresponding measurement set $\bar{M}\subset \Omega$. For the set $\digamma$, we can also do the decomposition to make $\bar{\digamma}\subset \Omega$. The set $\bar{G}=f(\Omega,\bar{\digamma},\bar{M})$ is the decomposed form of $G=f(\Omega,\digamma,M)$. Though the state space models based on $G$ and $\bar{G}$ can be different in format, they are equivalent in describing the same system dynamics. 
\end{remark}
\begin{definition}\label{CommuteOmega}
	The operation $\lfloor \cdot,\cdot \rceil$ is defined on any operators $A$ and $B$ such that
    $\lfloor A,B \rceil=O$ where $O\in \Omega$ and $O\propto [A,B]$.
\end{definition}

The following proposition simplifies the generation of accessible sets for qubit network systems.
\begin{proposition}\label{SimplifyPropo}
	For a qubit network system $\{\Omega,\bar{\digamma},\bar{M}\}$, the generation rules \eqref{iterativeAS} and \eqref{RuleForAS} are equivalent to the following rule
	\begin{equation}\label{NewRuleForAS01}
		G_m=\llparenthesis G_{m-1},\bar{\digamma} \rrparenthesis \cup G_{m-1}, 
	\end{equation}
	where
	\begin{equation}\label{NewRuleForAS}
	\begin{split}
	\llparenthesis  G_{m-1},\bar{\digamma}\rrparenthesis = & \{O_{\tau,\nu}|O_{\tau,\nu}=\lfloor \tau,\nu\rceil,\\
	&O_{\tau,\nu}\neq 0, \tau \in G_{m-1}, \nu \in \bar{\digamma}]\}.
	\end{split}
	\end{equation}.

\end{proposition}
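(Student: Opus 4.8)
The plan is to reduce the claimed equivalence to a single structural fact about commutators of Pauli strings and then propagate it through the iteration by induction on $m$. Concretely, I will show that for any $\tau,\nu\in\Omega$ the commutator $[\tau,\nu]$ is either $0$ or a nonzero scalar multiple of a \emph{single} element of $\Omega$, namely $\lfloor\tau,\nu\rceil$ of Definition \ref{CommuteOmega}. Granting this, the ``decompose the commutator into the basis and collect all basis operators that appear'' step in \eqref{RuleForAS} collapses to adding at most one operator per pair $(\tau,\nu)$, which is exactly what \eqref{NewRuleForAS} does.

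To prove the structural fact, write $\tau=\sigma_{i_1}\otimes\cdots\otimes\sigma_{i_N}$ and $\nu=\sigma_{j_1}\otimes\cdots\otimes\sigma_{j_N}$. Since $(\bigotimes_k\sigma_{i_k})(\bigotimes_k\sigma_{j_k})=\bigotimes_k(\sigma_{i_k}\sigma_{j_k})$ and, on each factor, $\sigma_{i_k}\sigma_{j_k}=c_k\,\sigma_{l_k}$ for some index $l_k\in\{0,1,2,3\}$ and some $c_k\in\{1,\mi,-\mi\}$ (using $\sigma_a^2=I$, $\sigma_a\sigma_b=\mi\epsilon_{abc}\sigma_c$ for distinct $a,b\in\{1,2,3\}$, and $\sigma_0$ acting as an identity factor), we obtain $\tau\nu=c\,O$ with $O=\sigma_{l_1}\otimes\cdots\otimes\sigma_{l_N}\in\Omega$ and $c=\prod_k c_k$. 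A short case check shows that exchanging $\tau$ and $\nu$ replaces each $c_k$ by $\overline{c_k}$, so $\nu\tau=\bar c\,O$ and hence $[\tau,\nu]=(c-\bar c)\,O=2\mi\,\mathrm{Im}(c)\,O$. This is zero exactly when $c\in\mathbb{R}$ (equivalently, when $\tau$ and $\nu$ commute) and is otherwise a nonzero multiple of $O\in\Omega$; the element $O$ is uniquely determined because the members of $\Omega$ are linearly independent, so $\lfloor\tau,\nu\rceil$ is well defined.

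Next I would recall that $\Omega$ is an orthogonal family for the Hilbert--Schmidt pairing $\langle A,B\rangle=\mathrm{Tr}(A^\dagger B)$ which spans the Hermitian operator space, so it can serve as the basis $\Lambda$ in \eqref{RuleForAS}. Given $\tau\in G_{m-1}\subset\Omega$ and $\nu\in\bar{\digamma}\subset\Omega$, the structural fact gives $[\tau,\nu]=c_{\tau,\nu}\lfloor\tau,\nu\rceil$, hence $\mathrm{Tr}(O_j^\dagger[\tau,\nu])=c_{\tau,\nu}\,\mathrm{Tr}(O_j^\dagger\lfloor\tau,\nu\rceil)\neq 0$ holds precisely when $c_{\tau,\nu}\neq 0$ and $O_j=\lfloor\tau,\nu\rceil$. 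Therefore $\llbracket G_{m-1},\bar{\digamma}\rrbracket=\{\lfloor\tau,\nu\rceil:\tau\in G_{m-1},\ \nu\in\bar{\digamma},\ \lfloor\tau,\nu\rceil\neq 0\}=\llparenthesis G_{m-1},\bar{\digamma}\rrparenthesis$. An induction on $m$ then closes the argument: the base case $G_0=\bar M\subset\Omega$ is the same under both rules; if the two rules have generated the same $G_{m-1}\subset\Omega$, the identity just established makes \eqref{iterativeAS} and \eqref{NewRuleForAS01} add the same operators, so $G_m$ coincides and still lies in $\Omega$. Since $\Omega$ is finite, both iterations saturate after finitely many steps at the same accessible set $G$.

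The only genuine obstacle is the case bookkeeping in the structural fact: one must verify that single-qubit Pauli products close onto $\Omega$ with scalars in $\{1,\pm\mi\}$, and that swapping the two strings conjugates the total scalar so that the commutator survives exactly in the anticommuting case. Everything after that is routine --- orthogonality of $\Omega$ turns the ``collect every basis operator with nonzero overlap'' clause of \eqref{RuleForAS} into the selection of the single operator $\lfloor\tau,\nu\rceil$, and the induction is then immediate.
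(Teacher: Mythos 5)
Your proposal is correct and follows essentially the same route as the paper's proof: establish that the commutator of any two elements of $\Omega$ is either zero or a nonzero scalar multiple of a unique element of $\Omega$, then use Hilbert--Schmidt orthogonality of $\Omega$ to show that the ``collect all basis operators with nonzero overlap'' clause of \eqref{RuleForAS} selects exactly the single operator $\lfloor\tau,\nu\rceil$ of \eqref{NewRuleForAS}. The one substantive difference is that you justify the closure step by the product decomposition $\tau\nu=c\,O$, $\nu\tau=\bar c\,O$, $[\tau,\nu]=2\mi\,\mathrm{Im}(c)\,O$, whereas the paper only invokes the single-qubit relations \eqref{PauliCommutaRelation} and asserts the multi-qubit statement; since $[A\otimes B,C\otimes D]$ is not $[A,C]\otimes[B,D]$, your factor-by-factor argument actually supplies a detail the paper leaves implicit, and your explicit closing induction on $m$ is likewise a small completeness gain over the paper's presentation.
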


\begin{proof}
	The Pauli matrices are orthogonal in the sense
	\begin{equation}\label{orthogonal}
	\text{Tr}(\sigma_a^\dagger \sigma_b)=
	\begin{cases}
	2 & a=b, \\
	0  & a\neq b,
	\end{cases}
	\end{equation}
	where $a,b \in \{0,1,2,3\}$. 
	
	Note that the Pauli matrices obey the following commutation relations 
	\begin{equation}\label{PauliCommutaRelation}
	[\sigma_a,\sigma_b]=
	\begin{cases}
	2\mi\epsilon_{abc}\sigma_c & a\neq b,\\
	0                        & a=b,
	\end{cases}
	\end{equation}
	where $a,b,c \in \{1,2,3\}$ and the constant $\epsilon_{abc}$ is the Levi-Civita symbol. Equation \eqref{PauliCommutaRelation} indicates that the commutator of Pauli matrices yields either a matrix that is proportional to another Pauli matrix or $0$. Based on this fact, we have $$\mathcal{C}_{O_{1,2}}[O_1,O_2]\in \Omega \quad \forall  O_1,O_2\in \Omega,$$ where $\mathcal{C}_{O_{1,2}}$ is a proper nonzero coefficient. 
	We can conclude that there exists a proper nonzero coefficient $\mathcal{C}_{\tau,\nu}$ such that
	\begin{equation}\label{AllInLambda}
	\mathcal{C}_{\tau,\nu}[\tau,\nu]\in \Omega \quad \forall \tau \in G_0, \nu \in \digamma.
	\end{equation}
	Equation \eqref{AllInLambda} indicates that the accessible set $G_m\subset \Omega$ given that $G_{m-1}\subset \Omega$ using generation rule \eqref{NewRuleForAS}. In our case, we have $\bar{\digamma} \subset \Omega$ and $\bar{G}_0 \subset \Omega$, which assures that the generation rule \eqref{NewRuleForAS} can guarantee that $G\in\Omega$.
		Let 
	\begin{equation}\label{Otao}
	O_{\tau,\nu}=\mathcal{C}_{\tau,\nu}[\tau,\nu],
	\end{equation}
	we have $O_{\tau,\nu}\in\Omega$, which confirms that the commutator of two operators in $\Omega$ yields another operator that is proportional to an operator in $\Omega$. According to \eqref{PauliCommutaRelation}, if $O_{\tau,\nu}\neq 0$, then
	\begin{equation}\label{OinApp}
	\text{Tr}(O_{\tau,\nu}^\dagger [\tau,\nu])\neq 0.
	\end{equation}
	According to \eqref{orthogonal}, for any $O\in\Omega$ with $O \neq O_{\tau,\nu}$,
	we have
	\begin{equation}\label{OzeroApp}
	\text{Tr}(O^\dagger [\tau,\nu])= 0.
	\end{equation}
	Equations \eqref{OinApp} and \eqref{OzeroApp} together indicate that $O_{\tau,\nu}$ is the operator that satisfies the requirement in \eqref{RuleForAS} and thus should be added into the accessible set. The generation rule \eqref{RuleForAS} can be simplified to \eqref{NewRuleForAS}.
\end{proof}

Proposition \ref{SimplifyPropo} indicates that all of the non-zero commutators of the operators in a former accessible set $G_{m-1}$ and the operators in $\bar{\digamma}$ should be added into the accessible set $G_m$. Compared with \eqref{RuleForAS}, \eqref{NewRuleForAS} avoids a full search of the elements in $\Omega$. Using \eqref{RuleForAS}, the average computation complexity of finding a single element in the set $\Omega$ is $O(3^N2^{3N})$. Using \eqref{NewRuleForAS}, the computational complexity of updating an element is reduced to $O(2^{3N})$.

Problem \ref{initialProblem} can now be restated as the following problem with a lower computational complexity.
\begin{problem}\label{SimplifyProblem}
	Develop an economic method to generate the accessible set $G=f(\Omega,\bar{\digamma}, \bar{M})$ with a good ordering, using rules \eqref{NewRuleForAS01} and \eqref{NewRuleForAS}.
\end{problem}

\subsection{Graphs generated by accessible sets}
Graphs can be employed to demonstrate the generation of accessible sets. We benefit from graphs mainly in three aspects. First, a graph visualizes the relationship between operators in the corresponding accessible set. Moreover, the repetition pattern revealed by a graph when generating an accessible set has the potential to be summarized and used to extend an accessible set to any given qubit number. Second, graphs can be used to arrange the ordering of element operators in the state vector to achieve a good structure of the state space matrices. Third, graphs can help with the proofs of our lemmas and propositions. 

We assign each accessible set $G$ a graph $\mathbb{G}$. The vertices of $\mathbb{G}$ are elements in the corresponding accessible set $G$. There is an edge $\langle O_m, O_n\rangle$ between two vertices $O_m$ and $O_n$ if and only if there exists a $\nu \in \bar{\digamma}$  such that
\begin{equation}
\text{Tr}(O_n^\dagger\lfloor O_m,\nu\rceil)\neq 0.
\end{equation}
We use such a $\nu$ to label the edge $\langle O_m, O_n\rangle$ and $\nu$ is called the \textit{edging operator}. The graph $\mathbb{G}$ can be described as $\mathbb{G}=\{G,\mathbb{E}\}$, where the accessible set $G$ is a set of vertex operators and $\mathbb{E}$ is the set of all of the edges. Moreover, we have the following definition.
\begin{definition}\label{edging}
	A path in the graph can be specified by a set of vertex operators $(O_1,O_2,\cdots, O_m)$ or by the starting operator, ending operator and a sequence of edging operators $\{O_1,(\nu_1,\nu_2, \cdots), O_m\}$. We refer to the sequence $E=(\nu_1,\nu_2, \cdots)$ as an \textit{edging sequence} which is a sequence of edging operators. $S(\bar{\digamma})$ is the set of all of the sequences of finite elements of edging operators chosen from $\bar{\digamma}$. Then, the notation $E\in S(\bar{\digamma})$ indicates that all of the elements in $E$ belong to $\bar{\digamma}$. We define $C(E)=\underbrace{\nu_1,\nu_2, \cdots}$ as the collection of edging operators in $E$.
\end{definition}
\begin{remark}
	The reason that the triplet $\{O_1,E, O_m\}$ can specify a path is based on the fact that the graphs in this paper are all simple graphs. It is worth noting the differences between a set, a collection and a sequence. Sets and sequences can be regarded as specific classes collections that are endowed with different features. While the uniqueness of objects in a collection is not guaranteed, a set is defined as a collection of distinct objects. While objects in a collection may not be ordered, elements in a sequence are uniquely ordered. For example, While $E_1=(X,Y,Y)$ and $E_2=(Y,X,Y)$ are two different sequences, the collections $C_E^1=C(E_1)=\underbrace{X,Y,Y}$ and $C_E^2=C(E_2)=\underbrace{Y,X,Y}$ are the same. Moreover, we have $E_1,E_2\in S(\{X,Y\})$ which indicates that sets of edging operators forming the sequences $E_1$ and $E_2$ are the same.
\end{remark}

Labeling the vertices of graph $\mathbb{G}$ with natural numbers, we obtain the adjacency matrix $\mathbb{A}$ whose $(i,j)$-th entry is 1, if and only if there is an edge connecting the $i$-th and $j$-th vertices \cite{Wilson96graph}. The state space matrix $A$ in \eqref{statespaceeq} has the same structure as $\mathbb{A}$, while having different elements from $\mathbb{A}$. The graph and the matrix $A$ share the same pattern in a certain sense.

 Based on the fact that $\lfloor\sigma_m,\sigma_i\rceil=\sigma_n$ and $\lfloor\sigma_m,\sigma_j\rceil=\sigma_n$ yield $\sigma_i=\sigma_j$ where $i,j,m,n \in \{0,1,2,3\}$, we have 
\begin{equation}\label{nomultiedge01}
\nu=u \quad  \nu,\ u\in \bar{\digamma}
\end{equation}
if
\begin{equation}\label{nomultiedge02}
\begin{split}
\begin{cases}
&\text{Tr}(O_n^\dagger\lfloor O_m,\nu\rceil)\neq 0,\\
&\text{Tr}(O_n^\dagger\lfloor O_m,u\rceil)\neq 0, 
\end{cases}\quad
\text{for some}\  O_m,  O_n\in \Omega.
\end{split}
\end{equation}
Hence, there are no multiple edges with the same direction between any two vertices which means the labeling of every edge is unique. Also, note that we always have 
\begin{equation}
\text{Tr}(O_m^\dagger\lfloor O_m,\nu\rceil)= 0
\end{equation}
for any $O_m,\nu\in\Omega$. This means there exists no edge $\langle O_m,O_m\rangle$ and therefore there is no loop in the graph.
We conclude that all of the graphs associated with accessible sets defined in this paper have no loops or multiple edges, which means they are simple graphs.

A graph is called undirected if there is no direction assigned to the edges. We have the following lemma which states that all of the graphs generated by accessible sets are essentially undirected:
\begin{lemma}\label{undirectedgraph}
	Assume that $\mathbb{G}=\{G,\mathbb{E}\}$ where $G=f(\Omega,\bar{\digamma},\bar{M})$ and $\mathbb{E}$ is the corresponding set of edges. Then each edge of $\mathbb{G}$ is bi-directed if endowed with direction. 
\end{lemma}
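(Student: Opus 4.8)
The plan is to show that whenever there is a directed edge $\langle O_m, O_n\rangle$ labelled by some $\nu\in\bar\digamma$, there is also a directed edge $\langle O_n, O_m\rangle$ labelled by the \emph{same} $\nu$. By definition of the edge set, the hypothesis $\text{Tr}(O_n^\dagger\lfloor O_m,\nu\rceil)\neq 0$ means that $O_n$ is (proportional to) the operator $O_{m,\nu}=\mathcal{C}_{m,\nu}[O_m,\nu]$ produced by the simplified generation rule of Proposition~\ref{SimplifyPropo}; in particular $O_n\propto[O_m,\nu]$ with a nonzero proportionality constant. What must be checked is that this forces $O_m\propto[O_n,\nu]$ with a nonzero constant as well, i.e. that $\lfloor O_n,\nu\rceil = O_m$ up to scale, so that $\text{Tr}(O_m^\dagger\lfloor O_n,\nu\rceil)\neq 0$.

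The key step is a purely local, single-qubit computation lifted to the tensor product. First I would reduce to the single-tensor-factor level: write $O_m=\bigotimes_{k}\sigma_{a_k}$ and $\nu=\bigotimes_k\sigma_{b_k}$ with $a_k,b_k\in\{0,1,2,3\}$. Since both are elements of $\Omega$, the commutator $[O_m,\nu]$ is nonzero iff $a_k\neq b_k$ and both are nonzero (i.e. the single-qubit commutator $[\sigma_{a_k},\sigma_{b_k}]$ is nonzero) on an \emph{odd} number of tensor positions; on every other position the factors either coincide or one is the identity, and there they multiply to a single Pauli (or identity). On each ``active'' position the relation \eqref{PauliCommutaRelation} gives $[\sigma_{a_k},\sigma_{b_k}] = 2\mi\,\epsilon_{a_k b_k c_k}\sigma_{c_k}$ with $c_k$ the third index, and the crucial involutive fact is that $\{a_k,b_k,c_k\}=\{1,2,3\}$, so that also $[\sigma_{c_k},\sigma_{b_k}]\propto\sigma_{a_k}$ with a nonzero constant. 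On the inactive positions, nesting the commutator with $\nu$ a second time simply reproduces the original factor up to a nonzero scalar because $\sigma_{b_k}$ either equals the factor there or is the identity, and $\sigma_{b_k}^2=I$. Assembling these factor-wise statements, one gets $[\,[O_m,\nu],\nu\,]\propto O_m$ with a nonzero overall constant, i.e. $\lfloor\lfloor O_m,\nu\rceil,\nu\rceil = O_m$. Since $O_n\propto\lfloor O_m,\nu\rceil$, this yields $\lfloor O_n,\nu\rceil \propto O_m$ with nonzero constant, hence $\text{Tr}(O_m^\dagger\lfloor O_n,\nu\rceil)\neq 0$, which is exactly the condition for the reverse edge $\langle O_n,O_m\rangle$ to exist, labelled by $\nu$. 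Combined with the already-established fact \eqref{nomultiedge01}--\eqref{nomultiedge02} that edge labels are unique, this shows each edge, if given a direction, is in fact bi-directed.

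I would organize the write-up as: (i) state the single-qubit identity $[[\sigma_a,\sigma_b],\sigma_b]=c\,\sigma_a$ for a nonzero constant $c$ (checking the three cases $a=b$, $a=0$, and $a\neq b$ both nonzero), (ii) tensor it up to conclude $\lfloor\lfloor O,\nu\rceil,\nu\rceil=O$ for all $O,\nu\in\Omega$ with $\lfloor O,\nu\rceil\neq 0$, and (iii) apply this with $O=O_m$ and translate back to the trace condition defining edges.

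The main obstacle I anticipate is bookkeeping rather than conceptual: one must carefully track which tensor positions are ``active'' (genuine nonzero, distinct Pauli factors) versus ``passive'' (equal factors, or one being the identity) and verify that double-commutation restores the original operator on \emph{every} position with a nonzero scalar, so that the global constant never collapses to zero. The sign/constant factors from the Levi-Civita symbol and from $\mi$ are irrelevant to the argument since only nonvanishing is needed, but the case split ensuring $\lfloor O,\nu\rceil\neq 0$ is preserved under a second bracket with $\nu$ (it is, because the active-position set is unchanged) deserves an explicit sentence.
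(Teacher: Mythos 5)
Your proposal is correct and follows essentially the same route as the paper's proof: both reduce the claim to the involution $\lfloor\lfloor O_m,\nu\rceil,\nu\rceil=O_m$, which the paper asserts in one line from the Pauli commutation relations \eqref{PauliCommutaRelation} and which you verify carefully factor-by-factor (your observation that the set of anticommuting tensor positions is preserved under the second bracket, so the double commutator cannot vanish, is exactly the detail the paper leaves implicit). No gaps; your write-up is simply a more complete version of the same argument.
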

\begin{proof}
	Suppose $O_m$ and $O_n$ are two different vertices and there is an edge $\langle O_m, O_n \rangle$ connecting $O_m$ and $O_n$. We prove that there exists an edge $\langle O_n, O_m \rangle$ and it has the same label as $\langle O_m, O_n \rangle$.
	
	According to the definition of an edge and the fact that we have an edge $\langle O_m, O_n \rangle$, there exists a $\nu \in \bar{\digamma}$ such that
	\begin{equation}
	O_n=\lfloor O_m,\nu\rceil.
	\end{equation}
	Then the edge $\langle O_m, O_n \rangle$ is labeled by $\nu$.
	According to \eqref{PauliCommutaRelation}, we have
	\begin{equation}
	O_m=\lfloor O_n,\nu\rceil.
	\end{equation}
	Then the edge $\langle O_n, O_m \rangle$ is also labeled by $\nu$. Since the edges $\langle O_m, O_n \rangle$ and $\langle O_n, O_m \rangle$ share the same vertices and label, the pair of vertices $O_m$ and $O_n$ are unordered. Since all of the edges are undirected, the graph is undirected. To put it differently, the iterative rules given in \eqref{iterativeAS} and \eqref{NewRuleForAS} can achieve a bi-directional search. 
\end{proof}
Considering Lemma \ref{undirectedgraph}, direction becomes a trivial property for graphs representing accessible sets. Hence, we regard all graphs employed  in this paper to be undirected. 

 Note that, a graph is connected if there exists at least one path between every pair of vertices. An \textit{induced subgraph} of a graph is another graph, formed from a subset of the vertices of the graph and all of the edges connecting pairs of vertices in that subset. We have the following lemma.
\begin{lemma}\label{connectedgraph01}
	Let $\mathbb{G}=\{G,\mathbb{E}\}$ where $G=f(\Omega,\bar{\digamma},\bar{M})$ and $\mathbb{E}$ is the corresponding set of edges. Also let $\mathbb{M}=\{M,\mathbb{E}_M\}$ be an induced graph of $\mathbb{G}$ where all of the vertices of $\mathbb{M}$ are in the measurement set $M$ and $\mathbb{E}_M$ is the corresponding set of edges. If the graph $\mathbb{M}$ is connected, then the graph $\mathbb{G}$ is connected.
\end{lemma}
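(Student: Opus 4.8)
The plan is to reduce the connectivity of $\mathbb{G}$ to two observations: first, that every operator produced by the iteration \eqref{NewRuleForAS01}--\eqref{NewRuleForAS} lies on an edge joining it to an operator produced one step earlier; and second, that the seed operators forming $\bar{M}$ already lie in a single connected piece of $\mathbb{G}$ by hypothesis. Concretely, the first thing I would establish is the following claim: \emph{for every vertex $O$ of $\mathbb{G}$ there is a path in $\mathbb{G}$ from $O$ to some vertex lying in $\bar{M}$}. This is proved by induction on the least index $m$ with $O\in G_m$. If $m=0$ then $O\in G_0=\bar{M}$ and the trivial (length-zero) path suffices. If $m\geq 1$ and $O\notin G_{m-1}$, then by \eqref{NewRuleForAS01} we have $O\in\llparenthesis G_{m-1},\bar{\digamma}\rrparenthesis$, so that $O=\lfloor\tau,\nu\rceil\neq 0$ for some $\tau\in G_{m-1}$ and $\nu\in\bar{\digamma}$; by the definition of the edge set this forces $\langle\tau,O\rangle\in\mathbb{E}$, and $\tau\neq O$ by the no-loop observation preceding Lemma \ref{undirectedgraph}. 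Since $\tau\in G_{m-1}$, the induction hypothesis provides a path from $\tau$ to some vertex of $\bar{M}$; prepending the edge $\langle\tau,O\rangle$ yields the required path from $O$.

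With this claim in hand, the remaining step is short. Because $\mathbb{M}=\{\bar{M},\mathbb{E}_M\}$ is an \emph{induced} subgraph of $\mathbb{G}$, every edge of $\mathbb{M}$ is also an edge of $\mathbb{G}$; hence if $\mathbb{M}$ is connected, all vertices of $\bar{M}$ belong to one and the same connected component, call it $\mathcal{K}$, of $\mathbb{G}$. Combined with the claim, every vertex of $\mathbb{G}$ is joined by a path to a vertex of $\bar{M}\subseteq\mathcal{K}$, hence lies in $\mathcal{K}$; so $\mathbb{G}$ has exactly one connected component, i.e.\ it is connected. Put operationally, for arbitrary vertices $O$ and $O'$ one concatenates a path from $O$ to some $p\in\bar{M}$, a path from $p$ to some $p'\in\bar{M}$ lying inside $\mathbb{M}$, and a path from $p'$ to $O'$, obtaining a walk (hence a path) from $O$ to $O'$.

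I do not expect a deep obstacle; the point that needs the most care is the bookkeeping of the induction — being precise that the ``step at which $O$ first enters the accessible set'' is well defined, and that rule \eqref{NewRuleForAS} really does force every newly added operator to equal the bracket $\lfloor\tau,\nu\rceil$ of an operator already present. This last fact is exactly what Proposition \ref{SimplifyPropo} secures, and it is what converts the algebraic saturation procedure into the graph-theoretic statement that $G_m$ is obtained from $G_{m-1}$ by attaching new vertices only along edges incident to old ones; once that translation is made, the lemma follows immediately. A minor, purely cosmetic issue is keeping the measurement-set notation consistent ($\bar{M}$ versus $M$), since $G_0=\bar{M}$ is the decomposed seed actually used by the iteration.
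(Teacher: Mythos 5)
Your proof is correct and follows essentially the same route as the paper: the paper's own (much terser) argument likewise observes that every element of $G$ is generated from, and hence connected by a path to, the seed set, and then invokes the connectedness of $\mathbb{M}$. Your version simply makes explicit the induction on the first iteration index $m$ at which an operator enters $G_m$, which the paper leaves implicit.
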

\begin{proof}
	All of the elements in the accessible set $G$ are generated by the elements in the initial set $M$. Thus, they are connected with the elements in $M$ according to the definition of the graph $\mathbb{G}$. Since $\mathbb{M}$ is assumed to be connected, the graph $\mathbb{G}$ is also connected.
\end{proof}

\begin{lemma}\label{connectedgraph02}
	Given $G=f(\Omega,\bar{\digamma},\bar{M})$ and $\tilde{G}=f(\Omega,\bar{\digamma},\tilde{M})$ where $\tilde{M}$ is a non-empty subset of $G$. If $G$ is connected, we have $\tilde{G}=G$.
\end{lemma}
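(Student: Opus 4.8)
The plan is to establish the two inclusions $\tilde{G}\subseteq G$ and $G\subseteq\tilde{G}$ separately. The first is essentially a saturation argument; the second is where connectedness and the bi-directedness of edges (Lemma \ref{undirectedgraph}) do the work.

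For $\tilde{G}\subseteq G$, I would induct on the iteration index of the generation rule \eqref{NewRuleForAS01}. Since $G$ is obtained by iterating \eqref{NewRuleForAS01} until saturation, it is a fixed point: $\llparenthesis G,\bar{\digamma}\rrparenthesis\subseteq G$. Now run the iteration starting from $\tilde{G}_0=\tilde{M}$. By hypothesis $\tilde{G}_0=\tilde{M}\subseteq G$, and if $\tilde{G}_{m-1}\subseteq G$, then every nonzero commutator $\lfloor\tau,\nu\rceil$ with $\tau\in\tilde{G}_{m-1}\subseteq G$ and $\nu\in\bar{\digamma}$ already lies in $\llparenthesis G,\bar{\digamma}\rrparenthesis\subseteq G$, so $\tilde{G}_m=\llparenthesis\tilde{G}_{m-1},\bar{\digamma}\rrparenthesis\cup\tilde{G}_{m-1}\subseteq G$. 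Taking the union over $m$ gives $\tilde{G}\subseteq G$.

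For the reverse inclusion, fix an arbitrary $O\in G$ and a fixed $O_0\in\tilde{M}$ (possible because $\tilde{M}\neq\emptyset$). Both $O$ and $O_0$ are vertices of the graph $\mathbb{G}$ associated with $G$, which is connected by hypothesis, so there is a path $O_0=P_0,P_1,\dots,P_\ell=O$ in $\mathbb{G}$ with each edge $\langle P_{i-1},P_i\rangle$ carrying a label $\nu_i\in\bar{\digamma}$. The crucial point is that, by Lemma \ref{undirectedgraph}, this edge is bi-directed with the same label, so irrespective of the orientation in which the edge was originally produced when building $G$, we have $P_i=\lfloor P_{i-1},\nu_i\rceil$, and this operator is nonzero since it is the vertex $P_i\in\Omega$. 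I then induct along the path: $P_0=O_0\in\tilde{M}\subseteq\tilde{G}$, and if $P_{i-1}\in\tilde{G}$ then $P_i=\lfloor P_{i-1},\nu_i\rceil\in\llparenthesis\tilde{G},\bar{\digamma}\rrparenthesis\subseteq\tilde{G}$, using that $\tilde{G}$ is itself saturated under \eqref{NewRuleForAS01}. Hence $O=P_\ell\in\tilde{G}$; since $O$ was arbitrary, $G\subseteq\tilde{G}$, and combining with the first inclusion gives $\tilde{G}=G$.

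The only genuinely delicate step is the second inclusion: the generation process from $\tilde{M}$ only ever moves ``forward'' via commutators, whereas a path in $\mathbb{G}$ may need to be traversed along edges in either orientation. This is exactly why Lemma \ref{undirectedgraph} is needed, and once it is invoked the induction along the path is clean. A small bookkeeping remark worth stating explicitly is that every $P_i$ on the path is a vertex of $\mathbb{G}$, hence a nonzero element of $\Omega$, so the commutators involved never vanish and the rule \eqref{NewRuleForAS} indeed inserts $P_i$ at each step.
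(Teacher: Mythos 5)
Your proof is correct and follows essentially the same idea as the paper, which disposes of the lemma in one sentence by asserting that connectedness lets the generation rules regenerate the whole accessible set from any subset of its operators. Your version simply makes that assertion rigorous — the saturation argument for $\tilde{G}\subseteq G$ and the path-traversal argument via Lemma \ref{undirectedgraph} for $G\subseteq\tilde{G}$ are exactly the details the paper leaves implicit.
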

\begin{proof}
Since we suppose that an undirected graph $G$ is connected, then the accessible set can be obtained starting from an arbitrary group of operators (not necessary the measurement operators) that belong to the accessible set, using the generation rules \eqref{NewRuleForAS01} and \eqref{NewRuleForAS}. Then Lemma \ref{connectedgraph02} follows. \end{proof}


\subsection{Special consideration for a class of spin chain systems}
\begin{figure}	
	\centering		
	\includegraphics[width=8cm]{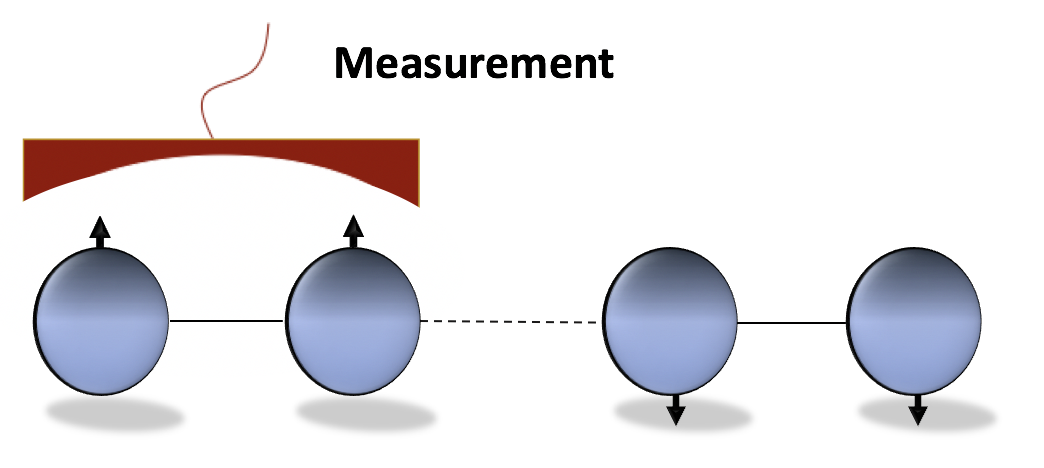}		
	\caption{An example of a quantum  network system whose elements are qubit systems coupled in the form of a chain. The measurement is on the first several (two in this example) qubits in the chain system.}
	\label{QuantumNetwork}		
\end{figure}
A chain system, where qubits are connected in the form of a string, is a fundamental and typical quantum network system (see Fig. \ref{QuantumNetwork})\cite{yuanlong2018Algorithm, Akira2017Hamiltonian}. Here, we consider a chain system consisting of $N$ qubits \cite{Akira2017Hamiltonian,yuanlong2018Algorithm}. The system Hamiltonian is 
\begin{equation}\label{hamiltonianmodel}
H=\sum_{k=1}^{N-1} h_k (X_k X_{k+1} + Y_{k}Y_{k+1})
\end{equation}
where the following notation is used $X\coloneqq \sigma_x$, $Y\coloneqq \sigma_y$ and $Z\coloneqq\sigma_z$. The subscript $k$ indicates that the operator is on the $k$-th qubit. The operator $X_k X_{k+1}$ represents 
\begin{equation}
I^{\otimes (k-1)}\otimes X_k\otimes X_{k+1}\otimes I^{\otimes (N-k-1)}.
\end{equation}
To write the operators in a compact form, we omit the tensor product symbol and the identity operator unless otherwise specified. The system whose Hamiltonian is given in \eqref{hamiltonianmodel} is an exchange model without transverse field \cite{Di2008,Chris2005}. The coupling Hamiltonian between the $k$-th and $(k+1)$-th qubit is $h_k (X_k X_{k+1} + Y_{k}Y_{k+1})$. 
The decomposed set $\bar{\digamma}$ for the chain system in \eqref{hamiltonianmodel} is
\begin{equation}\label{ExchangeDigamma}
\bar{\digamma} = \{ X_1 X_2 ,\   Y_1 Y_2,\  \cdots, \   X_kX_{k+1},\  Y_kY_{k+1},\  \cdots,  \} 
\end{equation}
where $1\leq k\leq N-1$.

For quantum chain systems, we present the following proposition to help with the generation of accessible sets for the system with Hamiltonian given in \eqref{hamiltonianmodel}.
\begin{proposition}\label{ExmSimpleClaim} 
	Given $\bar{\digamma}$ as in \eqref{ExchangeDigamma} and the measurement set $\bar{M}=\{Z^{\otimes (m-1)} X_m\}$, we have
	\begin{equation}\label{GXClaim1}
	G^X\coloneqq f(\Omega,\bar{\digamma},\bar{M})=\{O_1^X,\  \cdots,\  O_k^X, \ \cdots \}
	\end{equation}
	where
	\begin{equation}
	O_k^X=
	\begin{cases}
	Z^{\otimes (m+k-1)} X_{m+k}, \quad \text{$k$ is even},\\
	Z^{\otimes (m+k-1)} Y_{m+k}, \quad \text{$k$ is odd},
	\end{cases}
	\end{equation}
	and $0 \leq k\leq N-m$.	Similarly, if the measurement set is given as $\bar{M}=\{Z^{\otimes (m-1)} Y_m\}$, the corresponding accessible set is 
	\begin{equation}\label{GYClaim1}
	G^Y\coloneqq f(\Omega,\bar{\digamma},\bar{M})=\{O_1^Y,\  \cdots,\  O_k^Y, \ \cdots\  \}
	\end{equation}
	where
	\begin{equation}
	O_k^Y=
	\begin{cases}
	Z^{\otimes (m+k-1)} Y_{m+k}, \quad \text{$k$ is even},\\
	Z^{\otimes (m+k-1)} X_{m+k}, \quad \text{$k$ is odd} ,
	\end{cases}
	\end{equation}
	and $0 \leq k\leq N-m$.
\end{proposition}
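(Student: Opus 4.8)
The plan is to run the simplified iteration of Proposition~\ref{SimplifyPropo}: starting from $G_0=\bar{M}=\{O_0^X\}$ with $O_0^X=Z^{\otimes (m-1)}X_m$, one repeatedly adjoins every nonzero $\lfloor\tau,\nu\rceil$ with $\tau$ in the current set and $\nu\in\bar{\digamma}$, and shows that the sets produced are exactly $G_k=\{O_0^X,\dots,O_k^X\}$, stabilising at $k=N-m$. Everything then reduces to one structural computation: for each $O_k^X$, identify which generators $\nu\in\bar{\digamma}$ it fails to commute with and what those commutators yield.

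For that computation I would use the fact recorded after \eqref{PauliCommutaRelation} that every $O_k^X$ is a Pauli string and that the commutator of two Pauli strings is either $0$ or proportional to a third one. Concretely, $O_k^X$ is a solid block of $Z$'s on qubits $1,\dots,m+k-1$ capped by a single $X$ (if $k$ is even) or $Y$ (if $k$ is odd) on qubit $m+k$; each generator $X_jX_{j+1}$ or $Y_jY_{j+1}$ acts nontrivially on only two adjacent qubits, and two Pauli strings fail to commute exactly when they disagree, both entries non-identity, on an odd number of sites. Running through the finitely many positions of $\{j,j+1\}$ relative to the block, I expect to find, for $k$ even, $\lfloor O_k^X,Y_{m+k}Y_{m+k+1}\rceil\propto O_{k+1}^X$ and $\lfloor O_k^X,X_{m+k-1}X_{m+k}\rceil\propto O_{k-1}^X$, with every other $\lfloor O_k^X,\nu\rceil$ vanishing, together with the mirror statement (the roles of the $XX$- and $YY$-generators exchanged) for $k$ odd --- all of this subject to discarding any generator whose indices fall outside $\{1,\dots,N\}$. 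In graph terms this says that in the graph $\mathbb{G}^X$ associated with $G^X$ the vertex $O_k^X$ is adjacent only to $O_{k-1}^X$ and $O_{k+1}^X$, i.e.\ $\mathbb{G}^X$ is a path; listing its vertices in the order $(O_0^X,O_1^X,\dots)$ then makes the adjacency matrix $\mathbb{A}$, and hence the state-space matrix $A$, tridiagonal.

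Granting the structural computation, the induction is routine: $G_0=\{O_0^X\}$ generates $O_1^X$; for $1\le k<N-m$, $G_k$ generates only the already-present $O_{k-1}^X$ and the new $O_{k+1}^X$; and the iteration halts at $k=N-m$ because producing $O_{N-m+1}^X$ would require the generator $X_NX_{N+1}$ or $Y_NY_{N+1}$, which is absent from $\bar{\digamma}$ (the index in \eqref{ExchangeDigamma} runs only to $N-1$). This yields $G^X=\{O_0^X,\dots,O_{N-m}^X\}$ in the stated order. The statement for $G^Y$ then follows by the entirely parallel computation with the seed $Z^{\otimes (m-1)}Y_m$, or, up to immaterial scalars, from the $X\leftrightarrow Y$ relabelling that fixes $\bar{\digamma}$ in \eqref{ExchangeDigamma} and carries $O_k^X$ to $O_k^Y$.

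The main obstacle is the case analysis in the second paragraph, above all at the two ends of the chain. The right end is easy: once the missing generators $X_NX_{N+1}$ and $Y_NY_{N+1}$ are accounted for, $O_{N-m}^X$ acquires no neighbour beyond $O_{N-m-1}^X$. The delicate end is the left one: one must verify that acting on $O_0^X=Z^{\otimes (m-1)}X_m$ with the generator $X_{m-1}X_m$ does not push the accessible set onto qubits $1,\dots,m-1$ --- for $m=1$ that generator does not exist and the argument is clean, while for $m\ge 2$ checking that $O_0^X$ has no neighbour to its left is the crux. Beyond the boundaries, one only has to keep the two parities of $k$ straight and not be distracted by the nonzero scalars built into $\lfloor\cdot,\cdot\rceil$, which do not affect membership in $G^X$.
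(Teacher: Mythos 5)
Your overall strategy --- seed the iteration \eqref{NewRuleForAS01}--\eqref{NewRuleForAS} with $O_0^X$, compute $\lfloor O_k^X,\nu\rceil$ for every $\nu\in\bar{\digamma}$, and conclude by induction that the graph is a path --- is essentially the paper's own argument, which records exactly these commutator patterns in \eqref{rule1}. Your interior computation is correct: generators supported inside the $Z$-block anticommute with $O_k^X$ on two sites and hence commute with it, and at the cap only one of $X_{m+k-1}X_{m+k}$, $Y_{m+k-1}Y_{m+k}$ and one of $X_{m+k}X_{m+k+1}$, $Y_{m+k}Y_{m+k+1}$ survive, giving $O_{k-1}^X$ and $O_{k+1}^X$ respectively.

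The genuine gap is the step you explicitly defer as ``the crux'': for $m\geq 2$ the left-boundary check does not merely require care, it \emph{fails}. The string $O_0^X=Z^{\otimes(m-1)}X_m$ anticommutes with $X_{m-1}X_m$ on exactly one site (site $m-1$), so
\begin{equation*}
\lfloor Z^{\otimes(m-1)}X_m,\ X_{m-1}X_m\rceil \ \propto\ Z^{\otimes(m-2)}Y_{m-1}\ \neq\ 0 ,
\end{equation*}
and analogously $\lfloor Z^{\otimes(m-1)}Y_m,\ Y_{m-1}Y_m\rceil\propto Z^{\otimes(m-2)}X_{m-1}$. Hence $O_0^X$ \emph{does} have a neighbour to its left, and iterating leftward the accessible set must also contain $Z^{\otimes(j-1)}W_j$ for all $1\leq j<m$ with the appropriate alternation of $W\in\{X,Y\}$; the range $0\leq k\leq N-m$ in \eqref{GXClaim1} omits these operators. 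This is consistent with the paper's own applications: cases (c) and (e) of Section \ref{Sec4} list $X_1$ and $Z_1Y_2$ in the accessible set obtained from $Z_1Y_2$ and from $Z_1Z_2X_3$, i.e.\ they index the chain from cap position $1$ to $N$ rather than from $m$ to $N$. The paper's proof shares the omission (it only exhibits the rightward patterns \eqref{rule1} and never examines the generator $X_{m-1}X_m$ acting on the seed), so your plan cannot be completed as stated except for $m=1$, where the offending generator is absent and your argument is complete. For $m\geq 2$ the proposition must either be restated with the index running over all cap positions $1,\dots,N$, or your induction must be run in both directions from the seed.
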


\begin{proof}
	According to \eqref{NewRuleForAS}, the iterative generation rule involves adding non-zero operators that are generated by taking the commutator operation on operators in $G_{m-1}$ and operators in $\bar{\digamma}$ into the new accessible set $G_{m}$. Here we find the following common patterns
	\begin{equation}\label{rule1}
	\begin{split}
	&[O_{(1,k-1)} Z_kY_{k+1},X_{k+1}X_{k+2}] \\
	&\ \ \ \ =O_{(1,k-1)} Z_k[Y_{k+1},X_{k+1}]X_{k+2} \\
	&\ \ \ \ =(-2\mi)O_{(1,k-1)} Z_kZ_{k+1}X_{k+2}  ; \\
	&[O_{(1,k-1)} Z_kX_{k+1},Y_{k+1}Y_{k+2}]\\
	&\ \ \ \ =O_{(1,k-1)} Z_k[X_{k+1},Y_{k+1}]Y_{k+2} \\
	&\ \ \ \ =(2\mi)O_{(1,k-1)} Z_kZ_{k+1}Y_{k+2}  ; \\
	&[O_{(1,k-1)} Z_kY_{k+1},Y_{k+1}Y_{k+2}]\\
	&\ \ \ \ =O_{(1,k-1)} Z_k[Y_{k+1},Y_{k+1}]X_{k+2}=0; \\
	&[O_{(1,k-1)} Z_kX_{k+1},X_{k+1}X_{k+2}]\\
	&\ \ \ \ =O_{(1,k-1)} Z_k[X_{k+1},X_{k+1}]Y_{k+2}=0 ,\\
	\end{split}
	\end{equation}
	where $O_{(1,k-1)}$ is an operator acting on the first $(k-1)$ operators. For a system whose Hamiltonian takes the form of \eqref{hamiltonianmodel}, $O_{(1,k-1)}Z_kY_{k+1}\in G$ where $1\leq k\leq N-2$ leads to $O_{(1,k-1)}Z_kZ_{k+1}X_{k+2}\in G$. If we have $O_{(1,k-1)}Z_kX_{k+1}\in G$ where $1\leq k\leq N-2$, then we also have $O_{(1,k-1)}Z_kZ_{k+1}Y_{k+2}\in G$. The equalities in \eqref{rule1} provide us with operators that should be added when all of the operators in $G$ can be written as either in the form of $O_{(1,k-1)} Z_kY_{k+1}$ or in the form of $O_{(1,k-1)} Z_kX_{k+1}$. Note that, the added operators $O_{(1,k-1)}Z_kZ_{k+1}X_{k+2}$ and $O_{(1,k-1)}Z_kZ_{k+1}Y_{k+2}$ can be written in the form $O_{(1,k)}Z_{k+1}X_{k+2}$ and $O_{(1,k)}Z_{k+1}Y_{k+2}$, which facilitates the iterative generation of accessible sets.
\end{proof}
Proposition \ref{ExmSimpleClaim} provides us with accessible sets for cases such as (a), (c) and (e) in Section \ref{Sec4}. 


\subsection{Improving the ordering}\label{SecOrdering}
The results in Section \ref{SecComplex} concern the reduction of computational complexity. Here, we focus on the generation of accessible sets with good ordering. Two main objectives are:
\begin{itemize}
	\item To find a repetition pattern for the state-space model as the number of nodes increases;
	\item To reveal the connections between element operators in $G$.
\end{itemize}
These two objectives are vital for finding a repetition pattern for the state space model and writing down an $N$-qubit system model for arbitrary $N$. Otherwise, one only has accessible sets for several limited values of $N$, and the identification, analysis and control of the system will be difficult to be extended. Arranging the order of element operators in the state vector according to the graph, it is likely to obtain a state space model with good structure. 

\begin{definition}\label{BasisOperator}
	We denote the set $\mathcal{B}=\{I_{2\times2},\sigma_x.\sigma_y,\sigma_z\}$ as the cell set and an operator $O\in\mathcal{B}$ is a cell operator. 
\end{definition}
In this paper, we use the notation $X\coloneqq\sigma_x$, $Y\coloneqq \sigma_y$ and $Z\coloneqq \sigma_z$ interchangeably so the cell set can also be written as $\mathcal{B}=\{I_{2\times 2},X,Y,Z\}$.

\begin{definition}\label{kfinite}
	A set $G$ is said to be \textit{$k$-finite} if every operator $O\in G$ takes the following form
	\begin{equation}\label{kthqubitoperator01}
	O=\sigma_{s_1^k}\otimes \sigma_{s_2^k}\otimes \sigma_{s_3^k}\otimes\cdots\otimes \sigma_{s_j^k}\cdots
	\end{equation}
	where
	\begin{equation}\label{kthqubitoperator001} 
	s_j^k\in
	\begin{cases}
	\{0,1,2,3\}, \quad 1\leq j<k,\\
	\{1,2,3\}, \qquad j=k,\\
	\{0\},\quad\qquad k<j\leq M.
	\end{cases}
	\end{equation}
	Here, $M<\infty$ is the number of cell operators that form operators in $O$.
\end{definition}

We start from an $N$-qubit chain system with a Hamiltonian as in \eqref{hamiltonianmodel}. For such a system, we have the following proposition:
\begin{proposition}\label{SubsetsLemma}
	For an $i$-qubit network system with the Hamiltonian given in \eqref{hamiltonianmodel}, $\bar{\digamma}_i$ given in \eqref{ExchangeDigamma} and $\bar{M}$ connected, let $G_i=f(\Omega, \bar{\digamma}_i, \bar{M})$ be the corresponding accessible set. Define a series of sets $G_{\lfloor k}$ where $1\leq k\leq i$ 
   \begin{equation}\label{seperate}
   G_{\lfloor k}=
   \begin{cases}
   G_k, \quad k=1,\\
   G_k-G_{k-1}, \quad k>1.
   \end{cases}
   \end{equation}
 The operator $``-"$ acting on any sets $A$ and $B$ denoted by $A-B$ indicates the subtraction of the set $B$ from the set $A$. We have the following assertions.
 
   	Assertion 1: For $\forall 1\leq l\leq \mu \leq i$, we have $G_l \subseteq G_\mu$.
   
   Assertion 2: The set $G_{\lfloor k}$ is $k$-finite.
   
   Assertion 3: There exists $\bar{\digamma}_k\subset \bar{\digamma}$ such that $G_{\lfloor k}=f(\Omega,\bar{\digamma}_k,\{O_k\})$ where $O_k$ can be any operator in $G_{\lfloor k}$ and $\bar{\digamma}_k$ can be independent of the choice of $O_k$. 
\end{proposition}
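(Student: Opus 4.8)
The plan is to establish the three assertions in order, since each leans on the previous. For Assertion 1, the key observation is monotonicity of the generation rule under enlargement of the Hamiltonian set: adding qubit $i$ to a chain of $i-1$ qubits only adds the coupling operators $X_{i-1}X_i$ and $Y_{i-1}Y_i$ to $\bar{\digamma}_{i-1}$, so $\bar{\digamma}_{i-1}\subset\bar{\digamma}_i$, and $\bar M$ is unchanged. Starting both iterations \eqref{NewRuleForAS01}--\eqref{NewRuleForAS} from the same $G_0=\bar M$, a straightforward induction on the iteration index $m$ shows that every operator produced in the $(i-1)$-qubit run is also produced in the $i$-qubit run (the extra generators can only add more operators, never remove any). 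Hence $G_{i-1}\subseteq G_i$, and iterating gives $G_l\subseteq G_\mu$ for all $l\le\mu$. This also makes the definition \eqref{seperate} of $G_{\lfloor k}$ sensible (the set difference is a genuine "new layer").

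For Assertion 2, I would argue that every operator appearing in $G_{\lfloor k}$ acts nontrivially on qubit $k$ and trivially on all qubits with index $>k$, i.e. has the form \eqref{kthqubitoperator01}--\eqref{kthqubitoperator001}. The base case is that $\bar M$ is (by the standing assumption, e.g. as in Proposition \ref{ExmSimpleClaim}) supported on the first $m$ qubits, so $G_{\lfloor m}$ is $m$-finite. For the inductive step, one uses the structure of $\bar{\digamma}$ in \eqref{ExchangeDigamma}: applying the bracket $\lfloor\cdot,\cdot\rceil$ with a generator $X_kX_{k+1}$ or $Y_kY_{k+1}$ to a $k$-finite operator can, as the computations in \eqref{rule1} in the proof of Proposition \ref{ExmSimpleClaim} show, produce an operator supported out to qubit $k+1$ (and of the right "$Z\cdots Z\,\{X,Y\}$-tail" shape), but never reaches qubit $k+2$ or beyond; applying a generator $X_jX_{j+1}$, $Y_jY_{j+1}$ with $j<k-1$ either commutes (giving $0$) or keeps the support within the first $k$ qubits; and applying the generator $X_{k-1}X_k$ or $Y_{k-1}Y_k$ keeps support within the first $k$ qubits. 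So no operator with support reaching qubit $k+1$ can appear until generator $X_kX_{k+1}$ or $Y_kY_{k+1}$ is first used, and such operators appear exactly one layer later. A careful bookkeeping of which layer $G_{\lfloor k}$ each newly-created support-level first shows up in yields that $G_{\lfloor k}$ is precisely $k$-finite.

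For Assertion 3, the idea is to combine Lemma \ref{connectedgraph02} with the layered structure. Since $G_i$ is connected (by Lemma \ref{connectedgraph01}, because $\bar M$ is assumed connected), Lemma \ref{connectedgraph02} says $G_i=f(\Omega,\bar{\digamma},\tilde M)$ for \emph{any} nonempty $\tilde M\subseteq G_i$. I would show more: the full graph $\mathbb{G}_i$ decomposes, layer by layer, so that the induced subgraph on $G_{\lfloor k}$ is itself connected and is reached from any one of its vertices using only the generators $\bar{\digamma}_k := \{X_{k-1}X_k, Y_{k-1}Y_k\}\subset\bar{\digamma}$ (the only generators whose brackets stay inside the $k$-finite layer and connect its elements, again read off from \eqref{rule1}). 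Concretely, within a fixed $k$-finite layer the tail $Z^{\otimes(k-1)}X_k \leftrightarrow Z^{\otimes(k-1)}Y_k$ type moves are generated by $X_{k-1}X_k$ and $Y_{k-1}Y_k$, and the leading $k-1$ cell-slots are filled in by the same two generators acting at position $k-1$; so restricting to $\bar{\digamma}_k$ still connects all of $G_{\lfloor k}$, giving $G_{\lfloor k}=f(\Omega,\bar{\digamma}_k,\{O_k\})$ with $\bar{\digamma}_k$ independent of the choice of base operator $O_k\in G_{\lfloor k}$.

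The main obstacle I anticipate is the bookkeeping in Assertion 2/3: precisely tracking, as a function of the iteration index $m$, which support-level and which cell-pattern first appears, and then proving that the induced subgraph on each layer $G_{\lfloor k}$ is connected using only the two "local" generators $\bar{\digamma}_k$. The commutation identities \eqref{rule1} are the workhorse, but one has to verify that \emph{no other} generator contributes a genuinely new operator into layer $k$ (only "promotes" operators into layer $k+1$ or acts trivially), and that the candidate $\bar{\digamma}_k$ really does connect the whole layer rather than just a sub-piece; a clean way to do this is to exhibit, for an arbitrary target operator in $G_{\lfloor k}$, an explicit edging sequence in $S(\bar{\digamma}_k)$ from a fixed reference operator of that layer.
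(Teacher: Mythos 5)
Your Assertion 1 argument is the same monotonicity observation the paper uses, but the rest of the proposal has two concrete gaps. The first is that your candidate generator set for Assertion 3 is wrong: $\bar{\digamma}_k=\{X_{k-1}X_k,Y_{k-1}Y_k\}$ does \emph{not} connect the layer $G_{\lfloor k}$. Take the chain with $\bar{M}=\{Y_1Z_2\}$ from Section \ref{Sec4}: the layer $G_{\lfloor 3}$ contains $Z_2Y_3$, $X_1Y_2Y_3$ and $Y_1X_2Y_3$, and the latter two are reached from the core $Z_2Y_3$ only through the edges $X_1X_2$ and $Y_1Y_2$, since $\lfloor Z_2Y_3,X_2X_3\rceil=0$ and $\lfloor Z_2Y_3,Y_2Y_3\rceil\propto X_2\in G_{\lfloor 2}$. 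In general, an edge that rewrites one of the leading $k-2$ cell slots of an operator in $G_{\lfloor k}$ must be labeled by some $X_jX_{j+1}$ or $Y_jY_{j+1}$ with $j\leq k-2$, so the two ``local'' generators cannot do the job; the paper instead proves connectivity of each layer by a case analysis of how adjacent pairs in $G_{\lfloor k}$ lift to $G^1_{\lfloor k+1}$ (Lemmas \ref{Ass2&3G101} and \ref{Ass2&3G1}, the patterns of Fig. \ref{Gpatterns}), together with the observation that $G^2_{\lfloor k+1}$ hangs off $G^1_{\lfloor k+1}$. Your sketch supplies neither the correct generator set nor this case analysis.

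The second gap is in Assertion 2, where the direction you defer to ``careful bookkeeping'' is precisely the hard part and does not follow from \eqref{rule1} or from support-counting. You must exclude the possibility that $G_k-G_{k-1}$ contains an operator supported only on the first $k-1$ qubits, i.e.\ that a generating path which excurses onto qubit $k$ and then returns (ending with $I$ in slot $k$) produces something not already obtainable inside the $(k-1)$-qubit subchain. The paper closes this with Lemmas \ref{sublemma2}--\ref{sublemma4}: on such a path the level-$k$ edging operators occur an even number of times, edging operators can be reordered without changing the endpoint, and an even-multiplicity subcollection can be cancelled, so the path collapses to one in $S(\bar{\digamma}_{k-1})$ and the endpoint already lies in $G_{k-1}$. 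Without an argument of this kind (or an equivalent invariant ruling out ``detours''), your induction for Assertion 2 is not closed, and consequently neither is the induction you need for Assertion 3.
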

Proposition \ref{SubsetsLemma} reveals the relations between the sets $G_{k}$ and $G_{\lfloor k}$ for $k=1,2,\cdots$.  Please see Appendix \ref{APP1} for proof.

Equation \eqref{seperate} is equivalent to $G_{i}=G_{k-1}\cup G_{\lfloor k}$, which means one only needs to find $G_{\lfloor k}$ to obtain the accessible set $G_i$ given the accessible set $G_{k-1}$ for a class of spin chain systems. Moreover, if we observe a pattern shared by all of the graphs $\mathbb{G}_{\lfloor k}$, one can generate the accessible set $G_n$ for any given $n$. Furthermore, Assertion 3 in Proposition \ref{SubsetsLemma} confirms that all of the induced subgraphs $\mathbb{G}_{\lfloor k}$ are connected. The connectivity of $\mathbb{G}_{\lfloor k}$ indicates that all of the subsets $G_{\lfloor k}$ can be generated by starting from an arbitrary operator that belongs to $G_{\lfloor k}$. After finding an arbitrary operator $O\in G_{\lfloor k}$, one can obtain all of the operators in $G_{\lfloor k}$.

 We want to design a search algorithm that is suitable for generating all of the subsets $G_{\lfloor k}$. In the set $G$, we place the elements of $G_{\lfloor k}$ in front of the elements of $G_{\lfloor k+1}$. For different systems and measurement schemes, one needs to design a proper search rule accordingly. The main idea employed in generating an accessible set with a good ordering is to divide the accessible set $G$ into subsets to reveal a generation pattern that is shared by the accessible sets as the number of qubits increases.

Here, we summarize the generation process. Given a measurement scheme, we first decompose the measurement set and the Hamiltonian set into the form we defined in Definition \ref{decomposition}. Then we observe the measurement set to see if Proposition \ref{ExmSimpleClaim} can be applied to this situation. For some cases, we can obtain an accessible set at this stage. Otherwise, we determine if the graph associated with the accessible set is connected or not. If the graph is connected, we divide the accessible set into subsets to find certain repetition pattern when generating the subsets. If the graph associated with an accessible set is not connected, this paper can still provide some insight. Generally, a graph can be divided into several connected sub-graphs. The ideas in this paper can thus still be applied for the generation of the connected subgraphs. Collecting all of the vertices of the subgraphs together provides a complete accessible set. 

\section{Illustrative examples}\label{Sec4}

Here we present several examples to demonstrate the generation of a proper accessible set with good ordering. The object system is a chain system consisting of $N$ qubits. The system Hamiltonian is given in \eqref{hamiltonianmodel} and the set $\bar{\digamma}$ is given in \eqref{ExchangeDigamma}. We provide accessible sets for the following six measurement schemes:

\begin{inparaenum}
	\item [(a)] $M=\{X_1\}$;\ \ \ 
	\item [(b)] $M=\{Z_1\}$;\qquad\ 
	\item [(c)] $M=\{Z_1Y_2\}$;
	
	\item [(d)] $M=\{Y_1 Z_2\}$;
	\item [(e)] $M=\{Z_1Z_2 X_3\}$;
	\item [(f)] $M=\{X_1 Y_2 Z_3\}$.
\end{inparaenum}

For cases (a) and (b), only the first qubit in the chain system is measured. For cases (c) and (d), we measure the first two qubits of the chain system. For cases (e) and (f), the first three qubits are measured. These cases cover most of the common fundamental measurement settings, and several similar settings are omitted. For example, from the analysis on case (a) one can straightforwardly write down the analysis result when the measurement is $M=\{Y_1\}$.

To visualize the generation process, we employ graphs to describe accessible sets. According to Proposition \ref{SubsetsLemma}, when $\bar{M}$ has only one element, the graph $\mathbb{G}$ generated by a complete accessible set $G$ is connected, which means there is always a path connecting any two operators in $G$. This holds for all of the examples in this section and is clearly exemplified by case (b) (See Fig. \ref{M1Z1}). The graph $\mathbb{G}_{\lfloor k}$ associated with subset $G_{\lfloor k}$ is also connected under the assumption in Proposition \ref{SubsetsLemma}. This can also be observed from all of the examples, especially from cases (b), (d) and (f). 

 For cases (a), (c) and (e), we present analytical formula for the accessible set for an arbitrary number $N$. For cases (b), (d) and (f), we present the generation of the accessible for a fixed qubit number $N$, employing graphs to find the repetition pattern generating the accessible set. By observing and summarizing those generation patterns, we can determine the accessible set for any given $N$. Arranging the elements according to the graphs can provide us with a good structure for the state space equation matrices $A$, $B$ and $C$ in \eqref{statespaceeq}.

\paragraph{\textbf{Measuring $\mathbf{X_1}$}}
According to Proposition \ref{ExmSimpleClaim}, the accessible set $G$ can be obtained immediately as
\begin{equation}
G=\begin{cases}
\{X_1,Z_1Y_2, Z_1Z_2X_3,\ \cdots,\   Z^{\otimes (N-1)}Y_N\},\  \text{$N$ is even},\\
\{X_1,Z_1Y_2, Z_1Z_2X_3,\ \cdots,\   Z^{\otimes (N-1)}X_N\},\  \text{$N$ is odd}.
\end{cases}
\end{equation}

\paragraph{\textbf{Measuring $\mathbf{Z_1}$}}
 We have the following iterative generation rule:
\begin{equation}\label{zktozk+1}
\begin{cases}
\lfloor Z_k, X_kX_{k+1}\rceil=Y_kX_{k+1},\\
\lfloor Y_kX_{k+1},Y_{k}Y_{k+1}\rceil=Z_{k+1}.
\end{cases}
\end{equation}
From \eqref{zktozk+1} and the fact that $Z_1$ is in the accessible set, it can be identified that the operators $Z_k$, where $1\leq k \leq N$, are all in the accessible set $G$.

Aiming to find all of the other operators in the accessible set, we divide the accessible set $G$ into the following subsets
\begin{equation}
G=\cup_{k=1}^N G_{\lfloor k}=\cup_{k=1}^N \{Z_k,\  \cdots\ \}
\end{equation}
where the subset $G_{\lfloor k}$ is $k$-finite. 

We denote $Z_k$ as the `core' operator in the subset $G_{\lfloor k}$. A `core' operator is an operator selected from $G_{\lfloor k}$ and serves as the starting operator while generating $G_{\lfloor k}$. Since the graph $G_{\lfloor k}$ is connected, one can select any operator in $G_{\lfloor k}$ to be a core operator according to Proposition \ref{SubsetsLemma}, which means that all of the other operators in $G_{\lfloor k}$ can be generated from $Z_k$ by rule \eqref{NewRuleForAS}. 

In Fig. \ref{M1Z1}, the accessible set is given for the case where there are six qubits in the network system. Starting from the core operator, the generation of the operators in $G$ forms a graph which follows a clear repetition pattern. In subset $G_{\lfloor 1}$ (in the blue dashed box), there is only one operator $Z_1$ which is the measurement operator. In subset $G_{\lfloor 2}$ (in the yellow dashed box), there are three operators $Z_2$, $X_1Y_2$ and $Y_1X_2$. Following the special pattern revealed in Fig. \ref{M1Z1}, one can generate an accessible set for a chain system with an arbitrary number of qubits. Moreover, we can also turn to Fig. \ref{M1Z1} for a good ordering when constructing the system state variable $\boldsymbol{x}$.


\paragraph {\textbf{Measuring $\mathbf{Z_1Y_2}$}}
\begin{figure}	
	\centering		
	\includegraphics[width=9cm]{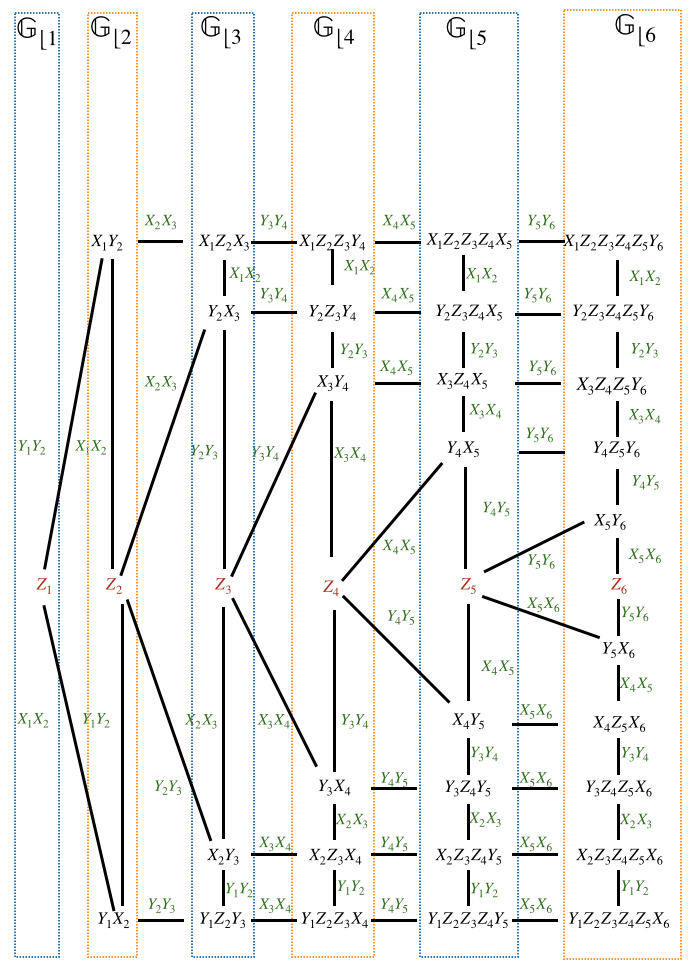}		
	\caption{Accessible set $G$ when measuring $Z_1$. The system contains six qubits. Element operators (marked in black and red) are vertices of the graph. Edges connecting vertices are labeled by operators (marked in green) in the set $\bar{\digamma}$ used to generate the vertices.}
	\label{M1Z1}		
\end{figure}
Given the initial measurement operator $Z_1 Y_2$, the accessible set is as follows according to Proposition \ref{ExmSimpleClaim}
\begin{equation}
G=\{O_1,O_2, \ \cdots\ , O_k, \ \cdots\ \},
\end{equation}
where
\begin{equation}
O_k=
\begin{cases}
Z_1Z_2\cdots Z_{k-1}Y_k, \quad \text{$k$ is even},\\
Z_1Z_2 \cdots Z_{k-1}X_k, \quad \text{$k$ is odd} .
\end{cases}
\end{equation}
\paragraph{\textbf{Measuring $\mathbf{Y_1 Z_2}$}}
We have the following equality
\begin{equation}
\begin{cases}
\lfloor Y_1Z_2,Y_1Y_2\rceil=X_2,\\
\lfloor X_2,Y_2Y_3\rceil=Z_2Y_3,
\end{cases}
\end{equation}
which indicates that the operator $Z_2 Y_3$ is in the accessible set $G$. Therefore, from Proposition \ref{ExmSimpleClaim}, the following operators are in the accessible set $G$
\begin{equation}
\{X_2,\ Z_2 Y_3,\ Z_2 Z_3X_4, \ \cdots, \  I\otimes Z^{\otimes (k-2)}O_k,\ \cdots \  \}\subset G,
\end{equation}
where $3\leq k \leq N$ and 
\begin{equation}\label{Appy1z2}
O_k=
\begin{cases}
X\quad  \text{$k$ is even},\\
Y\quad  \text{$k$ is odd}.
\end{cases}
\end{equation}
Aiming to find all of the other operators in $G$, we divide it into the following subsets
\begin{equation}
\begin{split}
G =& G_{\lfloor 2} \cup G_{\lfloor 3} \cup G_{\lfloor 4} \cup\  \cdots\  \cup G_{\lfloor k}\  \cdots \\
=& \{X_2,\ Y_1Z_2\} \cup \{Z_2 Y_3,\  \cdots\ \} \cup \{Z_2 Z_3X_4,\  \cdots\ \} \cup\cdots 
\end{split}
\end{equation}
where the subset $G_{\lfloor k}$ is $k$-finite. 

Let the `core' operator of $G_{\lfloor k}$ be $O_k^c$ 
\begin{equation}
O_k^c=
\begin{cases}
Z_2 \cdots Z_{k-1}X_k, \  \text{$k$ is even},\\
Z_2 \cdots Z_{k-1}Y_k, \  \text{$k$ is odd} .
\end{cases}
\end{equation}
According to Proposition \ref{SubsetsLemma}, all of the other operators in $G_{\lfloor k}$ can be generated from $O_k^c$ by rule \eqref{NewRuleForAS} given that the core operator $O_k^c$ belongs to the subset $G_{\lfloor k}$. 
\begin{figure}	
	\centering		
	\includegraphics[width=7cm]{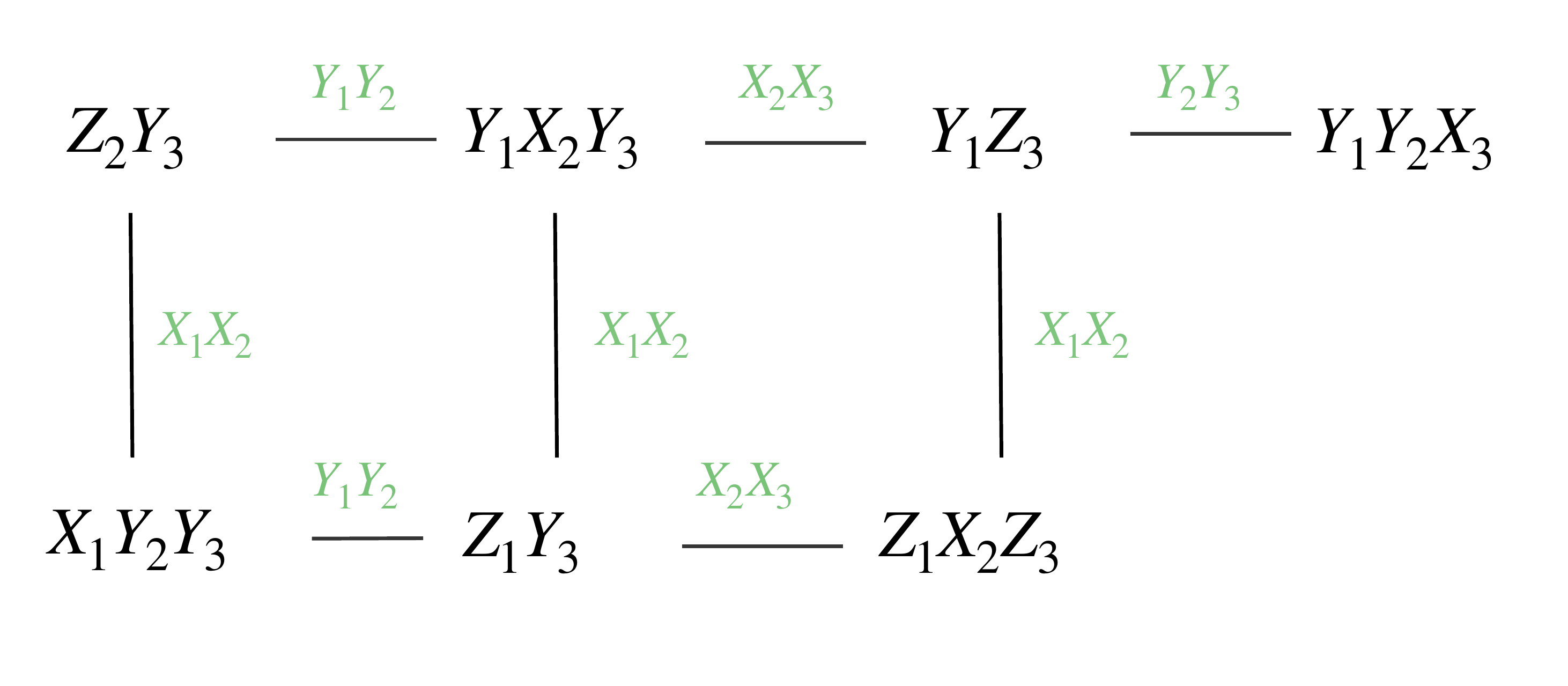}		
	\caption{The generation procedure for operators that form the set $G_{\lfloor 3} $.}
	\label{GenerationRulesN3}		
\end{figure}
\begin{figure}	
	\centering		
	\includegraphics[width=8cm]{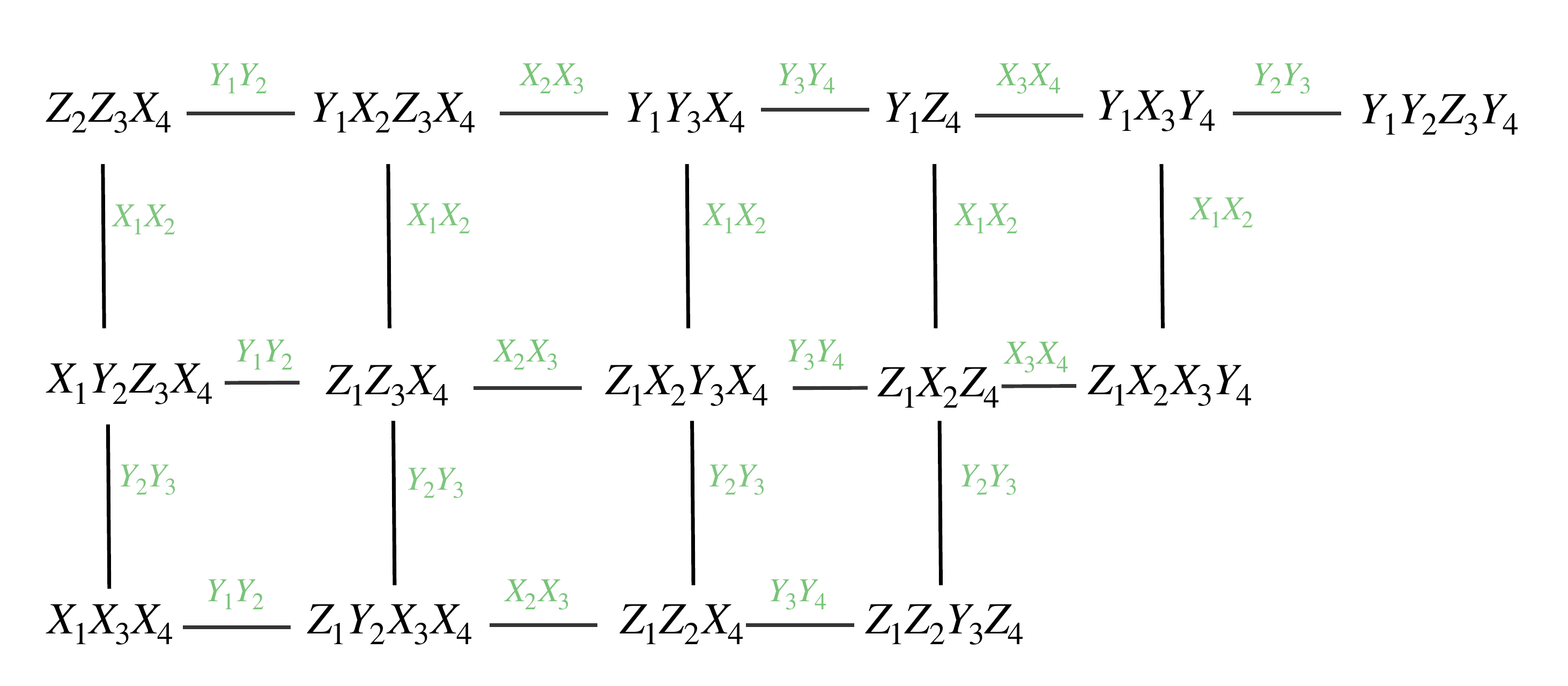}		
	\caption{The generation procedure for operators forming the set $G_{\lfloor 4} $. The operators in black are in the accessible set while operators in green are in $\bar{\digamma}$.}
	\label{GenerationRulesN4}		
\end{figure}
\begin{figure}	
	\centering		
	\includegraphics[width=8.5cm]{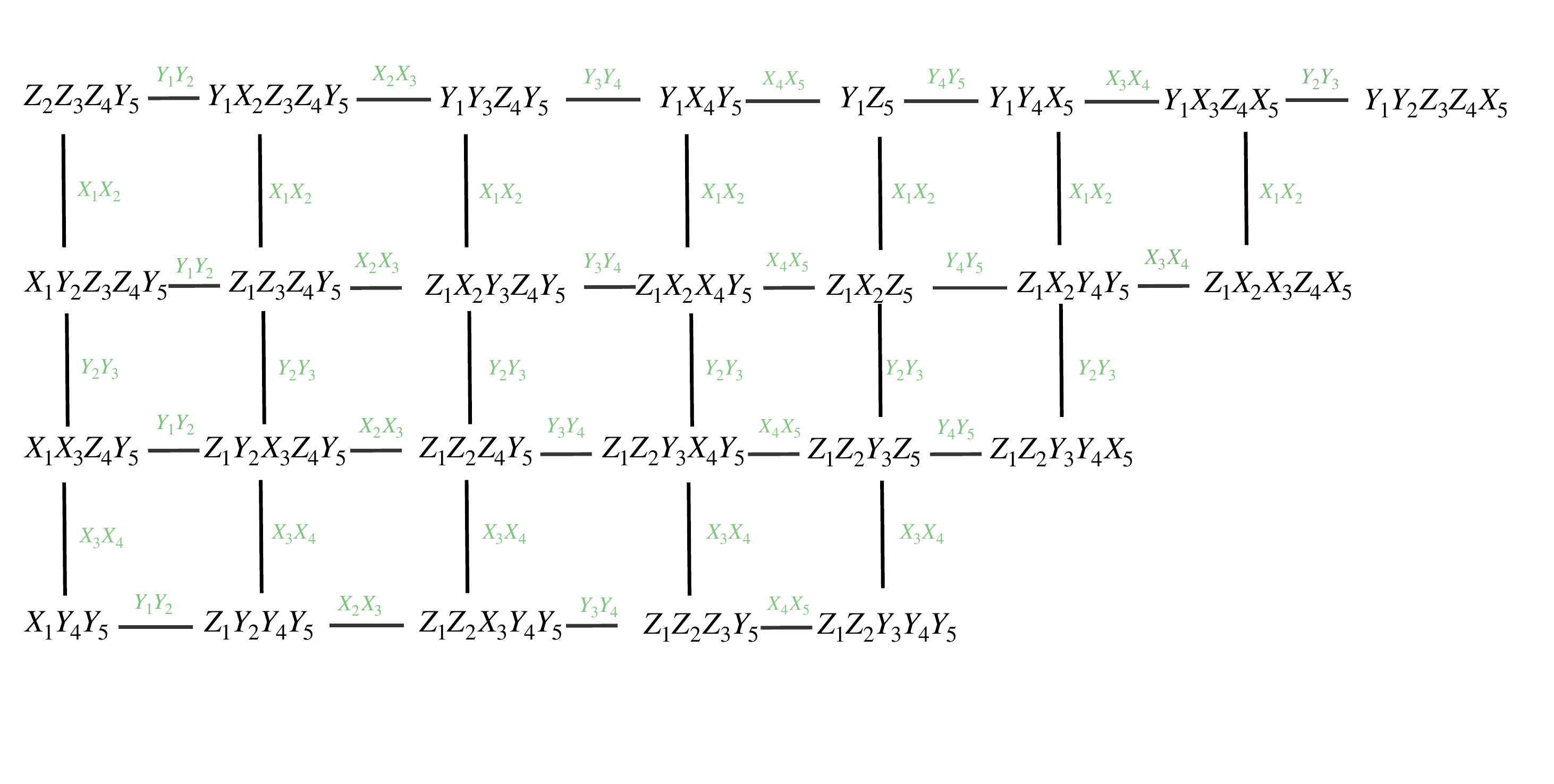}		
	\caption{The generation procedure for operators in set $G_{\lfloor 5} $.}
	\label{GenerationRulesN5}		
\end{figure}

The subset $G_{\lfloor 2} =\{ Y_1 Z_2,\  X_2  \}$ only contains two operators. Starting from the core operator $Z_2 Y_3$, elements in $G_{\lfloor 3}$ can be inferred and the generation procedure is shown in Fig. \ref{GenerationRulesN3}. For example, given that $Z_2 Z_3X_4\in G_{\lfloor 4}$ and $[Z_2 Z_3X_4,Y_1 Y_2]=-2\mi Y_1 X_2 Z_3X_4$, it can be inferred that $Y_1 X_2 Z_3X_4\in G_{\lfloor 4}$ as well, according to \eqref{NewRuleForAS}. Similarly, the generating processes and element operators for the subsets $G_{\lfloor 4}$ and $G_{\lfloor 5}$ areshown in Fig. \ref{GenerationRulesN4} and Fig. \ref{GenerationRulesN5}, respectively. The generation patterns for those sets are similar and repetitive. It can be seen that the number of elements in $G_{\lfloor 3}$ is $3+4=7$; the number of elements in $G_{\lfloor 4}$ is $4+5+6=15$; the number of elements in $G_{\lfloor 5}$ is $5+6+7+8=26$. Using the induction method, the number of operators in $G_{\lfloor k}$ is $\frac{(3k-2)(k-1)}{2}$.

Let the total number of qubits in the chain system be denoted as $N$. The total number of operators in the accessible set $G$ is
\begin{equation}
|G|=\sum_{K=2}^{N}\frac{(3k-2)(k-1)}{2}  = \frac{N^3-N^2}{2}.
\end{equation}
From the analysis above, it is clear that the number of operators in $G$ scales as $N^3$, which can be far more than the qubit number.

\paragraph{\textbf{Measuring $\mathbf{Z_1Z_2 X_3}$}}
	Given the initial measurement operator $Z_1Z_2 X_3$, the accessible set is as follows according to Proposition \ref{ExmSimpleClaim}
	\begin{equation}
	G=\{O_1,\ O_2, \  \cdots,\  O_k,\  \cdots \ \},
	\end{equation}
	where
	\begin{equation}
	O_k=
	\begin{cases}
	Z_1Z_2\cdots Z_{k-1}Y_k, \  \text{$k$ is even},\\
	Z_1Z_2 \cdots Z_{k-1}X_k, \  \text{$k$ is odd} .
	\end{cases}
	\end{equation}
\paragraph{\textbf{Measuring $\mathbf{X_1 Y_2 Z_3}$}}	
\begin{figure}	
	\includegraphics[width=0.6\textwidth,center]{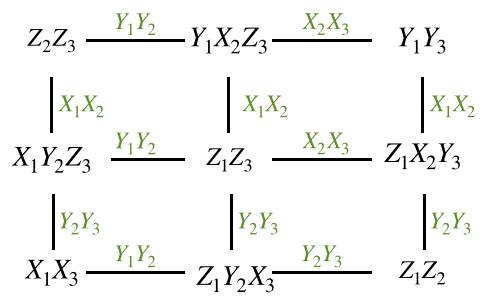}		
	\caption{The accessible set when measuring $X_1 Y_2 Z_3$. The number of qubits is $3$.}
	\label{3sensorN3}		
\end{figure}
\begin{figure}		
	\includegraphics[width=0.8\textwidth,center]{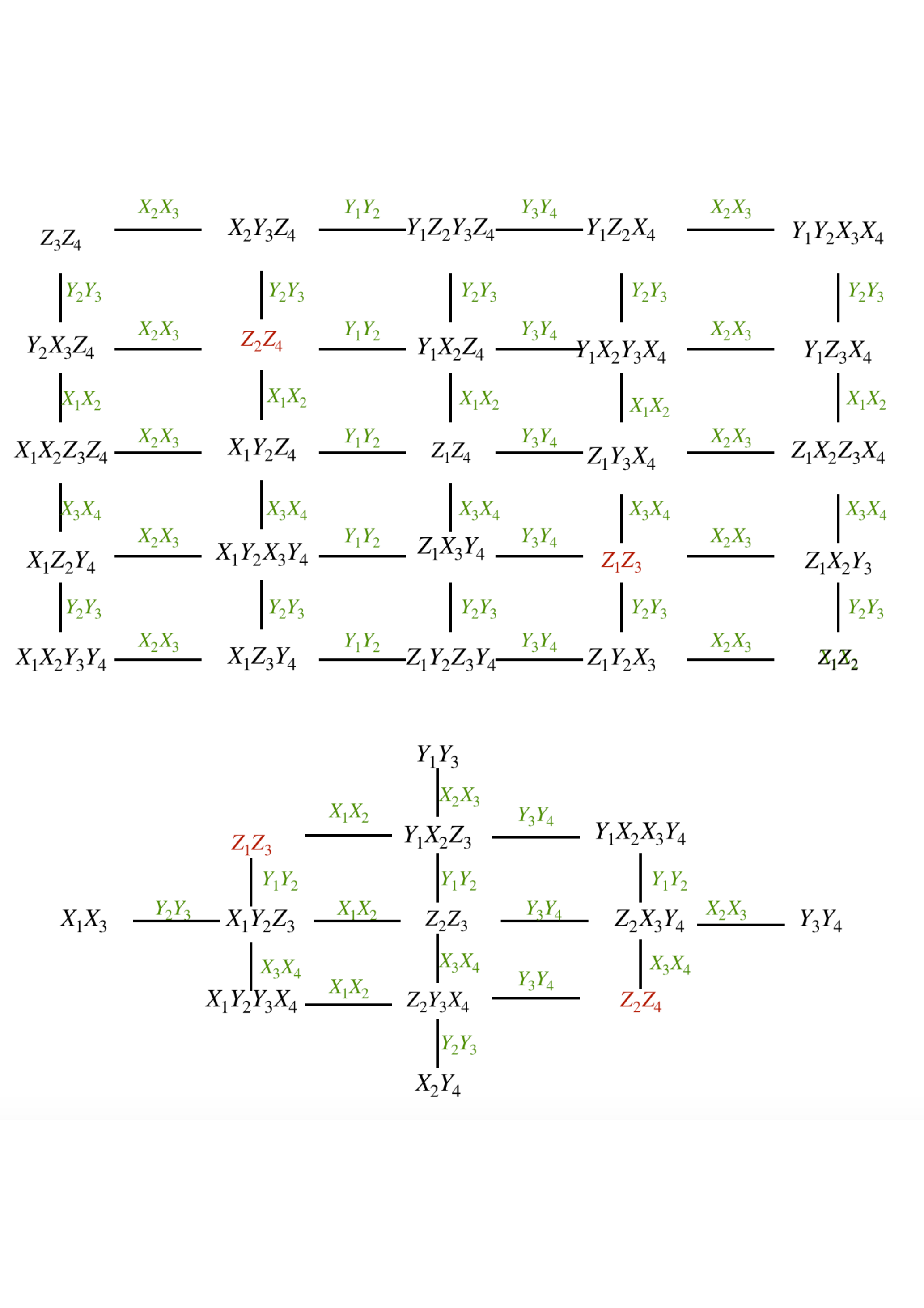}	
	\caption{The accessible set when measuring $X_1 Y_2 Z_3$. The number of qubits is $4$. The graph generated by $G_4$ can not be presented in an uncrossed two dimensional graph. Hence, we separate it into two graphs. However, there are repetitive elements in the two graphs (see the ones marked in red), which means the complete graph is still a connected simple graph. }
	\label{3sensorN4}		
\end{figure}
	When the measurement operator is $X_1 Y_2 Z_3$, the accessible set $G$ for $3$-qubit and $4$-qubit systems are shown in Fig. \ref{3sensorN3} and Fig. \ref{3sensorN4}, respectively. In both situations, the graphs show some complicated generation patterns with a certain repetition mode.
\section{Conclusion}\label{Conclusion}
The modeling of a quantum network system is one of the basic tasks for many problems such as quantum system identification, quantum control, quantum sensing and quantum filtering. In this paper, we investigated the problem of modeling a class of quantum network systems as the state space model, which is widely used in quantum engineering. To develop the state space models, a major task is to obtain an accessible set. We mainly focus on the generation of accessible sets given a system Hamiltonian and a measurement operator. We obtained a series of results that can simplify the generation procedure for accessible sets for a class of network systems. We also employed graphs to demonstrate the generation of accessible sets and to guide the ordering of elements in the state space vectors. Several examples were presented where the accessible sets for different measurement schemes were obtained. 


\begin{appendix}[Proof of Proposition \ref{SubsetsLemma}]

\section{Appendix: Proof of Proposition \ref{SubsetsLemma} }\label{APP1}
In order to prove Proposition \ref{SubsetsLemma}, we first present preliminaries and several lemmas that will be used.

To simplify the narrative, we divide elements in $G_{\lfloor k+1}$ into two classes $G^1_{\lfloor k+1}$ and $G^2_{\lfloor k+1}$ such that:
\begin{itemize}
	\item[1.] $G^1_{\lfloor k+1}\subset G_{\lfloor k+1}$ is the set of operators that are adjacent to elements in $G_k$;
	\item[2.] $G^2_{\lfloor k+1}\subset G_{\lfloor k+1}$ is the set of operators that are not adjacent to any element in $G_k$.
 \end{itemize}

 Note that the following three statements are equivalent: 1) $G^1_{\lfloor k+1}$ can be generated by the triplet $\{\Omega,\bar{\digamma}_k,\{O_{k+1}\}\}$ where $O_{k+1}$ is an arbitrary operator that belongs to $G^1_{\lfloor k+1}$; 2) The graph $\mathbb{G}^1_{\lfloor k+1}$ is connected; 3) There is always a path between every pair of vertices in the graph $\mathbb{G}^1_{\lfloor k+1}$. In a connected graph, there are no unreachable vertices. 
 
Noticing that an arbitrary operator in $\Omega$ is formed by the tensor product on a sequence of cell operators in $\mathcal{B}=\{I_{2\times2},X,Y,Z\}$. For a given operator $O\in \Omega$, we refer to the cell operator on the $j$th qubit as the $j$th cell operator. Moreover, for an operator $O\in G_k$ where $G_k$ is $k$-finite, we refer to the $k$th cell operator as the \textbf{\textit{ending operator}}. For example, the third cell operator of $O=X\otimes Y\otimes Z\otimes I$ is $Z$ and the ending operator of $O$ is $I$, given that $O\in G_{4}$. We can also say that the operator $O$ ends with $I$. Now we give two definitions and two lemmas.

 \begin{definition}\label{adjacentVertex}
 	Given $G=f(\Omega,\bar{\digamma},\bar{M})$ and its corresponding graph $\mathbb{G}$, two vertices $O_m\in G$ and $O_n\in G$ are called adjacent if and only if there exists $\nu \in \bar{\digamma}$ such that
 	\begin{equation}
 	\text{Tr}(O_n^\dagger,[O_m,\nu])\neq 0,
 	\end{equation}
 	which means there is an edge (labeled by $\nu$) connecting vertices $O_m$ and $O_n$ in the graph $\mathbb{G}$.
 \end{definition}
\begin{definition}\label{adjacentGraph}
	Graph $\mathbb{G}_m$ and graph $\mathbb{G}_n$ are adjacent if and only if there exists a vertex in $\mathbb{G}_m$ that is adjacent to a vertex in $\mathbb{G}_n$.
\end{definition}

\begin{lemma}\label{Ass2&3G101}
	 Suppose the system Hamiltonian is given in \eqref{hamiltonianmodel}, $\bar{\digamma}_i$ is given in \eqref{ExchangeDigamma} and $\bar{M}$ is connected. For the vertex $O_a\in G_{\lfloor k}$, if there exist $O_c, O_d\in G^1_{\lfloor k+1}$ which are adjacent to $O_a$, we have the following statements:
	\begin{itemize}
		\item[] 1) $O_c$ and $O_d$ are $(k+1)$-finite.
		\item[] 2) $O_c$ and $O_d$ are connected.
	\end{itemize}
\end{lemma}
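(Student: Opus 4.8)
The plan is, for each of $O_c$ and $O_d$, to identify exactly which edging operator of $\bar{\digamma}$ can have produced it from $O_a$, and then to read both assertions off a single Pauli commutator. I will use that $O_a$ is $k$-finite in the sense of Definition~\ref{kfinite}, that every operator in $G_{\lfloor k}$ carries a non-identity Pauli on qubit $k$, and that every operator in $G_k$ is the identity on all qubits beyond the $k$th — these are Assertions~1 and~2 of Proposition~\ref{SubsetsLemma} for the $\le k$-qubit truncations, which we may take as already established (the appendix proof runs by induction on the number of qubits). So write $O_a = W\otimes\sigma_{a_k}$, where $W$ is supported on qubits $1,\dots,k-1$, $\sigma_{a_k}\in\{X,Y,Z\}$ sits on qubit $k$, and all qubits beyond the $k$th carry the identity. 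Adjacency of $O_c$ to $O_a$ means $O_c\propto\lfloor O_a,\nu_c\rceil\ne 0$ for some $\nu_c\in\bar{\digamma}$, i.e., $\nu_c = X_jX_{j+1}$ or $Y_jY_{j+1}$ for some $j$; likewise $O_d\propto\lfloor O_a,\nu_d\rceil$.

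For Assertion~1) I would first rule out every $\nu_c$ except $X_kX_{k+1}$ and $Y_kY_{k+1}$. If $\nu_c$ is supported on qubits $\le k$ (that is, $j\le k-1$), then $\nu_c\in\bar{\digamma}_k$; since $O_a\in G_{\lfloor k}\subseteq G_k$ and $G_k=f(\Omega,\bar{\digamma}_k,\bar{M})$ is saturated under rule~\eqref{NewRuleForAS01}, we get $\lfloor O_a,\nu_c\rceil\in G_k$, contradicting $O_c\in G_{\lfloor k+1}=G_{k+1}-G_k$. If $\nu_c$ is supported on qubits $\ge k+1$ (that is, $j\ge k+1$), then $O_a$ is the identity on the support of $\nu_c$, so $[O_a,\nu_c]=0$, again impossible. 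Hence $j=k$, and evaluating $\lfloor W\otimes\sigma_{a_k},X_kX_{k+1}\rceil$ and $\lfloor W\otimes\sigma_{a_k},Y_kY_{k+1}\rceil$ with \eqref{PauliCommutaRelation} shows the result equals $W$ on qubits $1,\dots,k-1$, a non-identity Pauli on qubit $k$, the Pauli $X$ or $Y$ on qubit $k+1$, and the identity on all qubits beyond the $(k+1)$th — that is, it is $(k+1)$-finite. The same argument applies verbatim to $O_d$.

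For Assertion~2) I may assume $O_c\ne O_d$ (otherwise there is nothing to prove). By the previous paragraph each of them is $\lfloor O_a,\nu\rceil$ with $\nu\in\{X_kX_{k+1},Y_kY_{k+1}\}$, so distinctness forces \emph{both} of these commutators to be nonzero, which by \eqref{PauliCommutaRelation} is possible only if $\sigma_{a_k}$ anticommutes with both $X$ and $Y$; hence $\sigma_{a_k}=Z$, and up to relabelling $O_c\propto W\otimes Y_k\otimes X_{k+1}$ and $O_d\propto W\otimes X_k\otimes Y_{k+1}$. I would then exhibit the common neighbour $P:=W\otimes I_k\otimes Z_{k+1}$: a direct computation gives $\lfloor O_c,Y_kY_{k+1}\rceil\propto P$ and $\lfloor O_d,X_kX_{k+1}\rceil\propto P$, while $P\in G_{k+1}$ (it is obtained from $O_c\in G_{k+1}$ by an edging operator of $\bar{\digamma}_{k+1}$, under which $G_{k+1}$ is saturated) and $P\notin G_k$ (it carries a non-identity Pauli on qubit $k+1$, whereas every operator of $G_k$ is the identity there). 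Thus $P\in G_{\lfloor k+1}$ and $P$ is adjacent to both $O_c$ and $O_d$, so $O_c$ and $O_d$ are connected; alternatively, the two-edge path through $O_a$ already connects them in $\mathbb{G}$.

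The step carrying the weight is the classification in Assertion~1 — that the only edging operators able to move $O_a$ out of $G_k$ into $G_{\lfloor k+1}$ are exactly $X_kX_{k+1}$ and $Y_kY_{k+1}$ — which rests on the saturation of $G_k$ under $\bar{\digamma}_k$; once this is in place, both assertions reduce to short commutator calculations with the Pauli relations. One should also take care to confirm that the intermediate operator $P$ genuinely lies in $G_{\lfloor k+1}$ and not merely in $G_{k+1}$, which again uses that every operator of $G_k$ is the identity beyond qubit $k$.
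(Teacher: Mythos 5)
Your proof is correct and follows essentially the same route as the paper: you identify that the two children of $O_a$ can only arise from the edging operators $X_kX_{k+1}$ and $Y_kY_{k+1}$ acting on an operator ending in $Z_k$, and you join the resulting vertices $W\otimes Y_k\otimes X_{k+1}$ and $W\otimes X_k\otimes Y_{k+1}$ through the common neighbour $W\otimes I_k\otimes Z_{k+1}\in G_{\lfloor k+1}$, which is exactly the triangle/path the paper exhibits in Fig.~\ref{Triangle}. Your argument is in fact more complete, since you explicitly rule out edging operators supported on qubits $j\neq k$ via saturation of $G_k$ and the identity support of $O_a$ beyond qubit $k$, a step the paper only asserts.
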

\begin{proof}
We follow the same notation as given in Proposition \ref{SubsetsLemma}. For the system given in Lemma \ref{Ass2&3G101}, the only case where $O_a$ has two adjacent operators in $G^1_{\lfloor k+1}$ is when $O_a$ ends with $Z$, where two edging operators $X_kX_{k+1},Y_kY_{k+1}$ can be applied to $O_a$ and generating operators in $G^1_{\lfloor k+1}$. Fig. \ref{Triangle} demonstrates the case where two edges $Y_kY_{k+1}$ and $X_kX_{k+1}$ leading the operator $O_{(1,k-1)}Z_k$ in $G_{\lfloor k}$ to operators $O_{(1,k-1)}X_kY_{k+1}$ and $O_{(1,k-1)}Y_kX_{k+1}$ in $G^1_{\lfloor k+1}$. Here, $O_{(1,k-1)}$ represents an arbitrary operator on the first $(k-1)$ qubits. We can find a path $(O_{(1,k-1)}X_kY_{k+1}, O_{(1,k-1)}I_kZ_{k+1}, O_{(1,k-1)}Y_kX_{k+1})$, where $O_{(1,k-1)}I_kZ_{k+1}$ is also in $G^1_{\lfloor k+1}$. Thus, operators generated in this pattern are connected and are $(k+1)$-finite. 
\end{proof}

\begin{figure}
	\includegraphics[width=.55\textwidth]{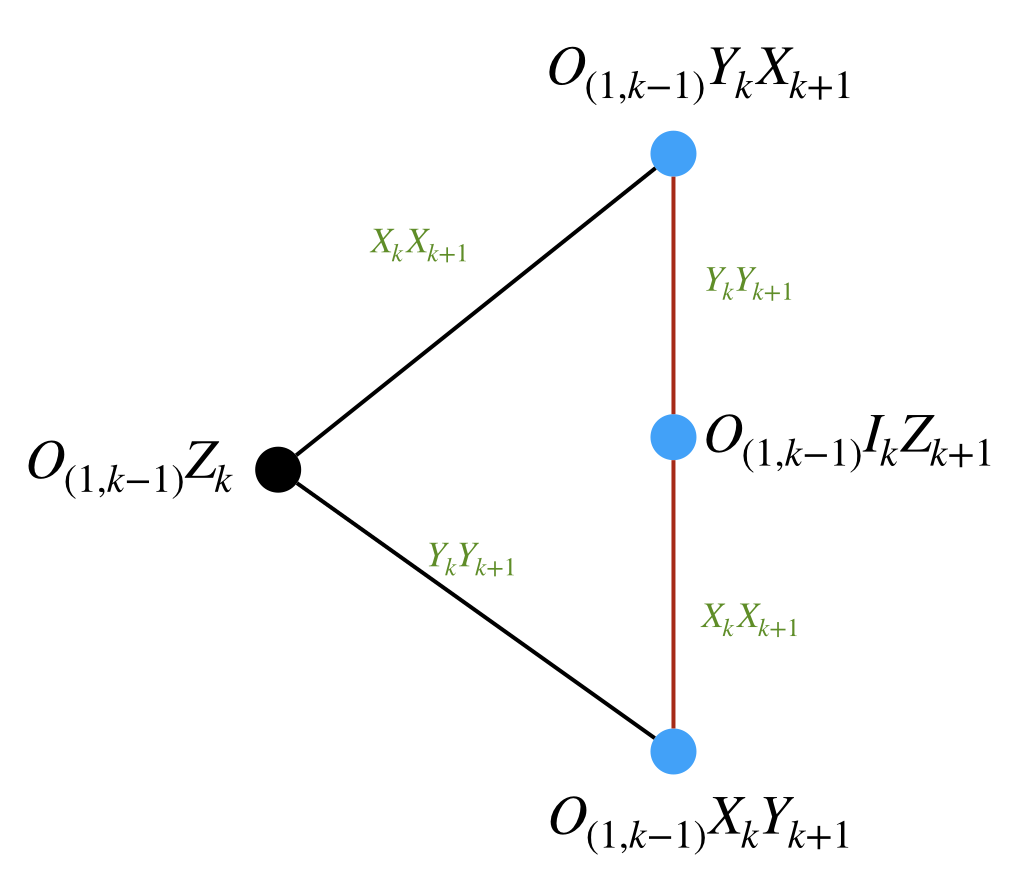}	
	\caption{The operator in $G_{\lfloor k}$ ends with cell operator $Z$. The edges with different labels connecting it to two operators in $G^1_{\lfloor k+1}$.}	
	\label{Triangle} 
\end{figure} 

\begin{lemma}\label{Ass2&3G1}
 Suppose the system Hamiltonian is as given in \eqref{hamiltonianmodel}, $\bar{\digamma}_i$ is as given in \eqref{ExchangeDigamma} and $\bar{M}$ is connected. Two vertices $O_a, O_b\in G_{\lfloor k}$ are adjacent. If there exist $O_c\in G^1_{\lfloor k+1}$ which is adjacent to $O_a$, we have the following statements:
\begin{itemize}
\item[] 1) There exists $O_d\in G^1_{\lfloor k+1}$ which is adjacent to $O_b$. 
\item[] 2) $O_c$ and $O_d$ are $(k+1)$-finite.
\item[] 3) There exists a path in $G_{\lfloor k+1}$ that connects $O_c$ and $O_d$.
\end{itemize}
\end{lemma}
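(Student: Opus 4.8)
The plan is to argue entirely at the level of Pauli strings, using that any two elements of $\Omega$ either commute or anticommute, that their product is again (up to phase) an element of $\Omega$, and the $k$-finiteness of $G_{\lfloor k}$. The first step is to pin down the two edging operators in play. Since $O_c\in G^1_{\lfloor k+1}$ arises from $O_a$ via some $\mu\in\bar{\digamma}$ and, as an element of $G_{\lfloor k+1}$, is supported on the first $k+1$ qubits but not on the first $k$ alone, while $O_a$ is $k$-finite, the term $\mu$ must act on qubit $k+1$ but not on qubit $k+2$; being a nearest-neighbour term of $\bar{\digamma}$ in \eqref{ExchangeDigamma}, $\mu\in\{X_kX_{k+1},\,Y_kY_{k+1}\}$. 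Likewise the term $\nu\in\bar{\digamma}$ labelling $\langle O_a,O_b\rangle$ cannot act on qubit $k+1$, for otherwise $\lfloor O_a,\nu\rceil=O_b$ would not be $k$-finite; hence $\nu$ is supported on qubits $\le k$ and touches qubit $k$ only if $\nu\in\{X_{k-1}X_k,\,Y_{k-1}Y_k\}$. Consequently the supports of $\mu$ and $\nu$ are disjoint or meet only in qubit $k$, so $\mu$ and $\nu$ either commute, or anticommute --- the latter only when their $k$th cell operators differ.

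\emph{Main case} $[\mu,\nu]=0$. Write $O_b\propto O_a\nu$ and $O_c\propto O_a\mu$ as Pauli strings. The anticommutation of $O_a$ with $\mu$ (which holds since $\lfloor O_a,\mu\rceil=O_c\ne 0$) transfers to $O_b$ because $\nu$ commutes with $\mu$, so $O_d:=\lfloor O_b,\mu\rceil\ne 0$; being $\lfloor(\text{a }k\text{-finite operator}),\mu\rceil$ with $\mu$ reaching qubit $k+1$, $O_d$ is $(k+1)$-finite, and it is adjacent to $O_b\in G_k$, hence $O_d\in G^1_{\lfloor k+1}$. The same bookkeeping shows $O_c$ is $(k+1)$-finite, which settles 1) and 2). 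For 3), $\nu$ anticommutes with $O_c$ (as $O_a$ anticommutes with $\nu$ and $\mu$ commutes with $\nu$), and $\lfloor O_c,\nu\rceil\propto O_a\mu\nu=O_a\nu\mu\propto O_d$, so $\lfloor O_c,\nu\rceil=O_d$ as elements of $\Omega$, with $O_d\ne O_c$ since $\nu\ne I$; as $\nu$ acts only on qubits $\le k$ this vertex is still $(k+1)$-finite, so $\langle O_c,O_d\rangle$ is a single edge lying inside $G_{\lfloor k+1}$ --- the required path.

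\emph{Remaining case} $\{\mu,\nu\}\ne 0$. I would first show the $k$th cell operator of $O_a$ must be $Z$: the anticommutation of $O_a$ with $\mu$ forces that cell operator to anticommute with the $k$th cell operator of $\mu$, which is $X$ or $Y$, and if it were the other of $\{X,Y\}$ then $O_b=\lfloor O_a,\nu\rceil$ would carry the identity on qubit $k$, contradicting $O_a,O_b\in G_{\lfloor k}$. Since $O_a$ ends in $Z$ it anticommutes with both $X_kX_{k+1}$ and $Y_kY_{k+1}$, so the ``other'' nearest-neighbour term $\mu^*$ also produces a vertex $\tilde O_c:=\lfloor O_a,\mu^*\rceil\in G^1_{\lfloor k+1}$, and $\mu^*$ now has the same $k$th cell operator as $\nu$, so $[\mu^*,\nu]=0$. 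Running the main-case construction with $(\tilde O_c,\mu^*)$ in place of $(O_c,\mu)$ yields $O_d:=\lfloor O_b,\mu^*\rceil\in G^1_{\lfloor k+1}$, which is $(k+1)$-finite, together with an edge $\langle\tilde O_c,O_d\rangle$ inside $G_{\lfloor k+1}$. Finally, Lemma \ref{Ass2&3G101} applied to $O_a$ and its two neighbours $O_c,\tilde O_c$ in $G^1_{\lfloor k+1}$ gives a path from $O_c$ to $\tilde O_c$ inside $G_{\lfloor k+1}$; concatenating with $\langle\tilde O_c,O_d\rangle$ gives the path from $O_c$ to $O_d$.

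The crux is this remaining case: one must recognise that the only obstruction to applying $\mu$ directly to $O_b$ is a mismatch of $k$th cell operators between $\nu$ and $\mu$, that this mismatch forces $O_a$ to end in $Z$, and that ending in $Z$ is exactly what produces the auxiliary neighbour $\tilde O_c$ through which Lemma \ref{Ass2&3G101} routes the path. Everything else is routine tracking of commutation and anticommutation of Pauli strings.
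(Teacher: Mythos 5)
Your argument is correct and, at bottom, rests on the same two geometric mechanisms as the paper's proof: the ``square'' (when the edging operator $\nu$ of $\langle O_a,O_b\rangle$ commutes with the edging operator $\mu$ of $\langle O_a,O_c\rangle$, conjugation carries the edge $\langle O_a,O_b\rangle$ up to an edge $\langle O_c,O_d\rangle$ labelled by the same $\nu$), and the ``triangle'' through $O_{(1,k-1)}I_kZ_{k+1}$ of Lemma \ref{Ass2&3G101} when the two edging operators clash on qubit $k$. The difference is organizational, and it is a genuine improvement in economy: the paper classifies the six possible pairs of ending cell operators of $O_a,O_b$, rules one out, and then verifies connectivity pattern-by-pattern against Fig.~\ref{Gpatterns} (patterns 1--4, with pattern 3 handled by an explicit three-edge path and pattern 4 by decomposition); you instead split on whether $[\mu,\nu]=0$, derive the paper's excluded case and the ``$O_a$ ends in $Z$'' dichotomy as consequences of the (anti)commutation bookkeeping, and treat each branch with a single uniform algebraic computation on Pauli strings. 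Your route makes it clearer \emph{why} the pattern enumeration closes up, at the price of leaning on the implicit induction hypothesis that $G_k$ is supported on the first $k$ qubits (which the paper also uses tacitly). Two small notational slips worth fixing: your ``remaining case'' should be labelled $\{\mu,\nu\}=0$ (equivalently $[\mu,\nu]\neq 0$), not $\{\mu,\nu\}\neq 0$; and the condition for anticommutation should read ``their $k$th cell operators differ \emph{and are both non-identity}.'' Neither affects the substance.
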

\begin{proof}
We follow the same notation as given in Proposition \ref{SubsetsLemma}.
Considering the ending cell operator, the elements in $G_{\lfloor k}$ can be classified into three classes: operators whose ending operator is $X$; operators whose ending operator is $Y$; operators whose ending operator is $Z$. Thus there are $6$ possible pairs of vertices which are adjacent:
\begin{itemize}
\item[] 1) $O^\alpha_{(1,k-1)}X_k$ and $O^\beta_{1,k-1}X_k$;
\item[] 2) $O^\alpha_{(1,k-1)}X_k$ and $O^\beta_{1,k-1}Y_k$;
\item[] 3) $O^\alpha_{(1,k-1)}X_k$ and $O^\beta_{1,k-1}Z_k$;
\item[] 4) $O^\alpha_{(1,k-1)}Z_k$ and $O^\beta_{1,k-1}Z_k$;
\item[] 5) $O^\alpha_{(1,k-1)}Y_k$ and $O^\beta_{1,k-1}Y_k$;
\item[] 6) $O^\alpha_{(1,k-1)}Y_k$ and $O^\beta_{1,k-1}Z_k$.
\end{itemize}
Note that case 5) is similar to case 1) and case 6) is similar to case 3). We only consider case 1) to 4). Moreover, we state that case 2) does not exist in our graph. This can be proved by contradiction. Suppose $O^\alpha_{(1,k-1)}X_k,O^\beta_{1,k-1}Y_k\in G_{\lfloor k}$ are adjacent. Then there exist $E\in \bar{\digamma}_k$ such that
\begin{equation*}
	\lfloor O^\alpha_{(1,k-1)}X_k, E\rceil=O^\beta_{1,k-1}Y_k.
\end{equation*}
Rewriting $E$ in the form $O^E_{(1,k-1)}O^E_k$, we have 
\begin{equation} \label{App001}
\lfloor O^\alpha_{(1,k-1)}X_k, O^E_{(1,k-1)}O^E_k\rceil=O^\beta_{1,k-1}Y_k.
\end{equation}
Equations \eqref{OperationDecomp} and \eqref{App001} indicate that $O^E_k=Z$ which is not possible since $E\in \bar{\digamma}_k$ where $\bar{\digamma}_k=\{X_{k-1}X_k,Y_{k-1}Y_k\}$. Thus, operators $O^\alpha_{(1,k-1)}X_k$ and $O^\beta_{1,k-1}Y_k$ can not be adjacent. Thus, to prove Lemma \ref{Ass2&3G1}, it suffices to prove that operators generated by the above pairs of vertices in cases 1), 3) and 4) are connected and $(k+1)$-finite, given that the vertices in each pair are adjacent.

 For operators in $G_{\lfloor k]}$ with ending operator $X$, the edge leading it to $G^1_{\lfloor k+1}$ can only be $Y_{k}Y_{k+1}$. For operators whose ending operator is $Y$, the edge can only be $X_{k}X_{k+1}$. For operators whose ending operator is $Z$, the edges can be chosen to be either $X_{k}X_{k+1}$ or $Y_{k}Y_{k+1}$.

\begin{figure*}[]
	\centering
	\subfigure[]{\includegraphics[width=0.23\textwidth]{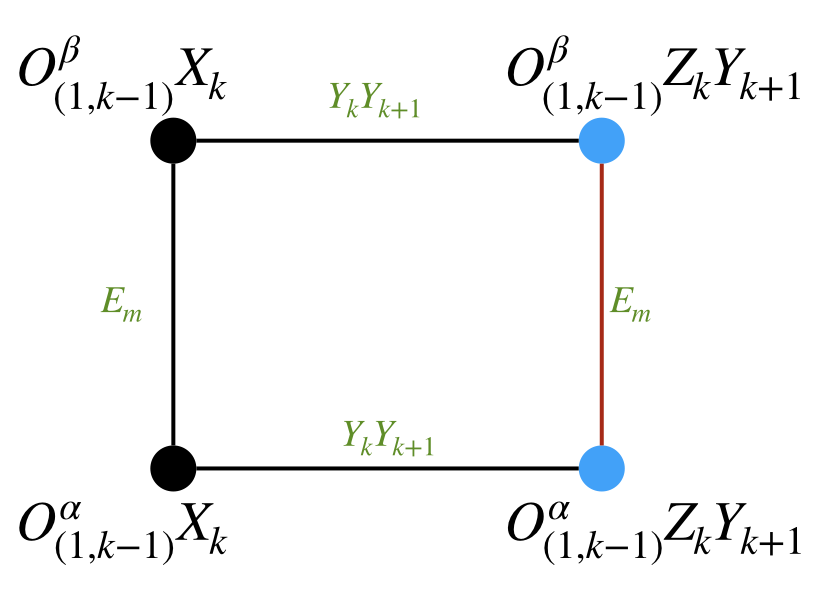}\label{Smallsquare02}}\hspace{0.0\textwidth}
	\subfigure[] {\includegraphics[width=0.3\textwidth]{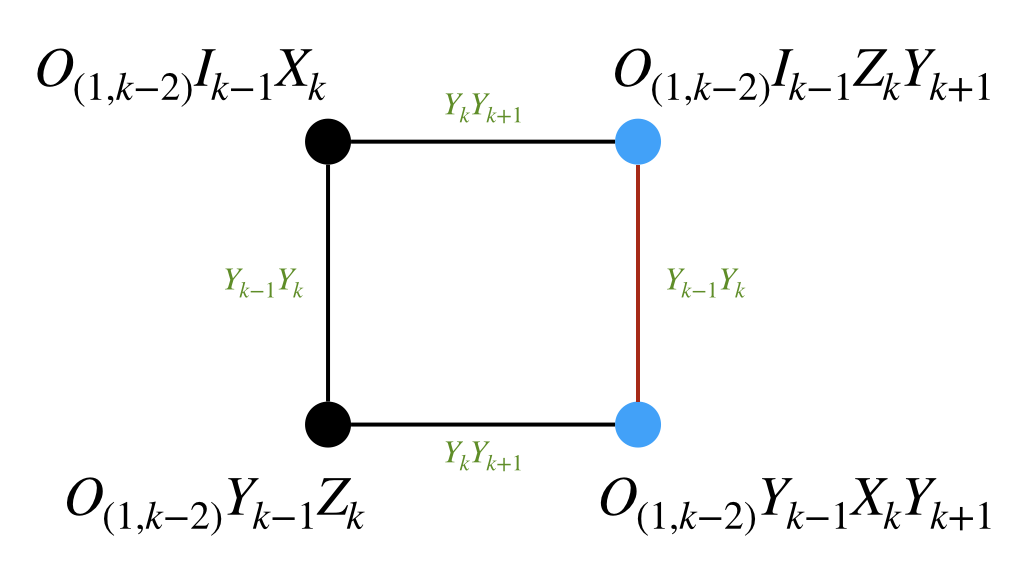}\label{smallsquare}}
	\subfigure[] {\includegraphics[width=0.22\textwidth]{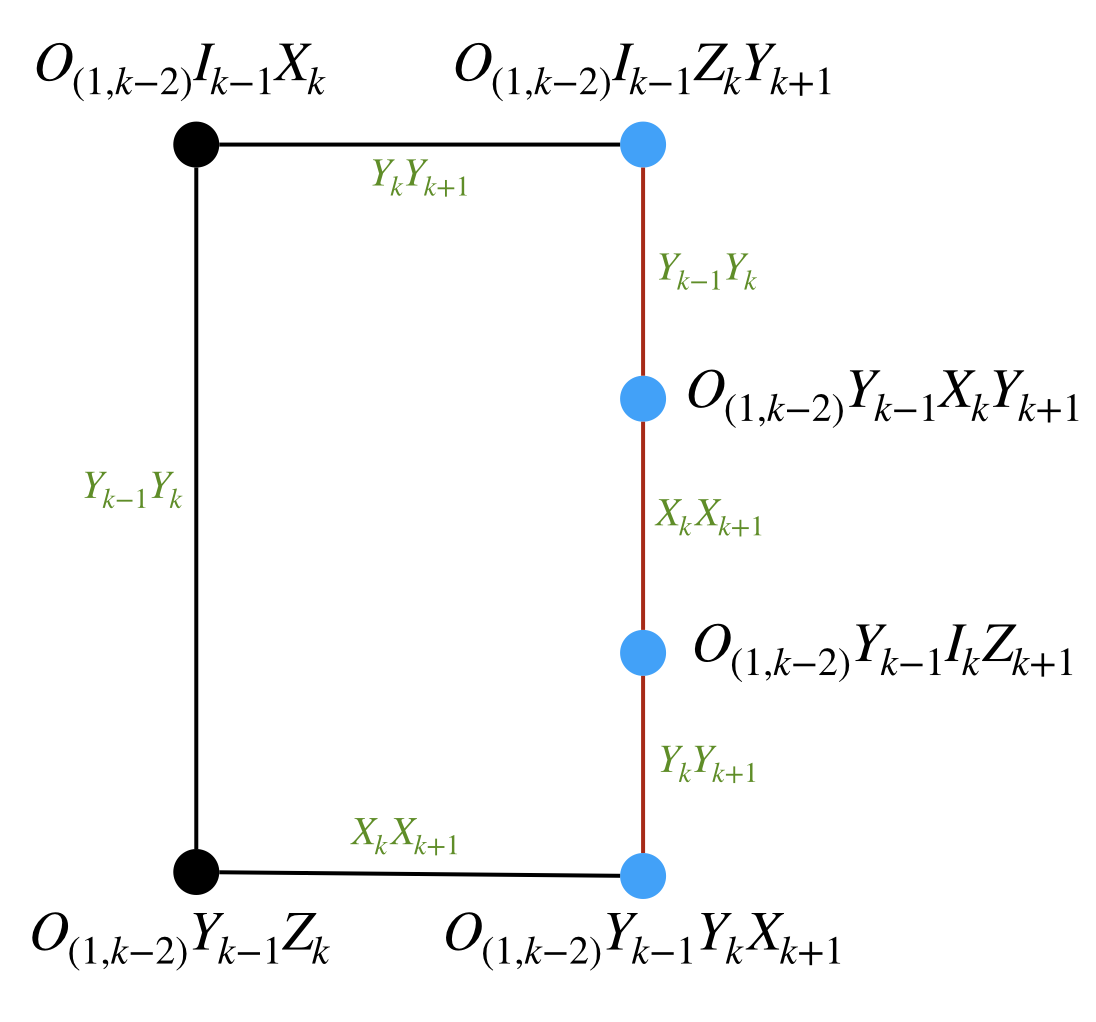}\label{LargeSquare}} \hspace{0.0\textwidth}
	\subfigure []{ \includegraphics[width=0.2\textwidth]{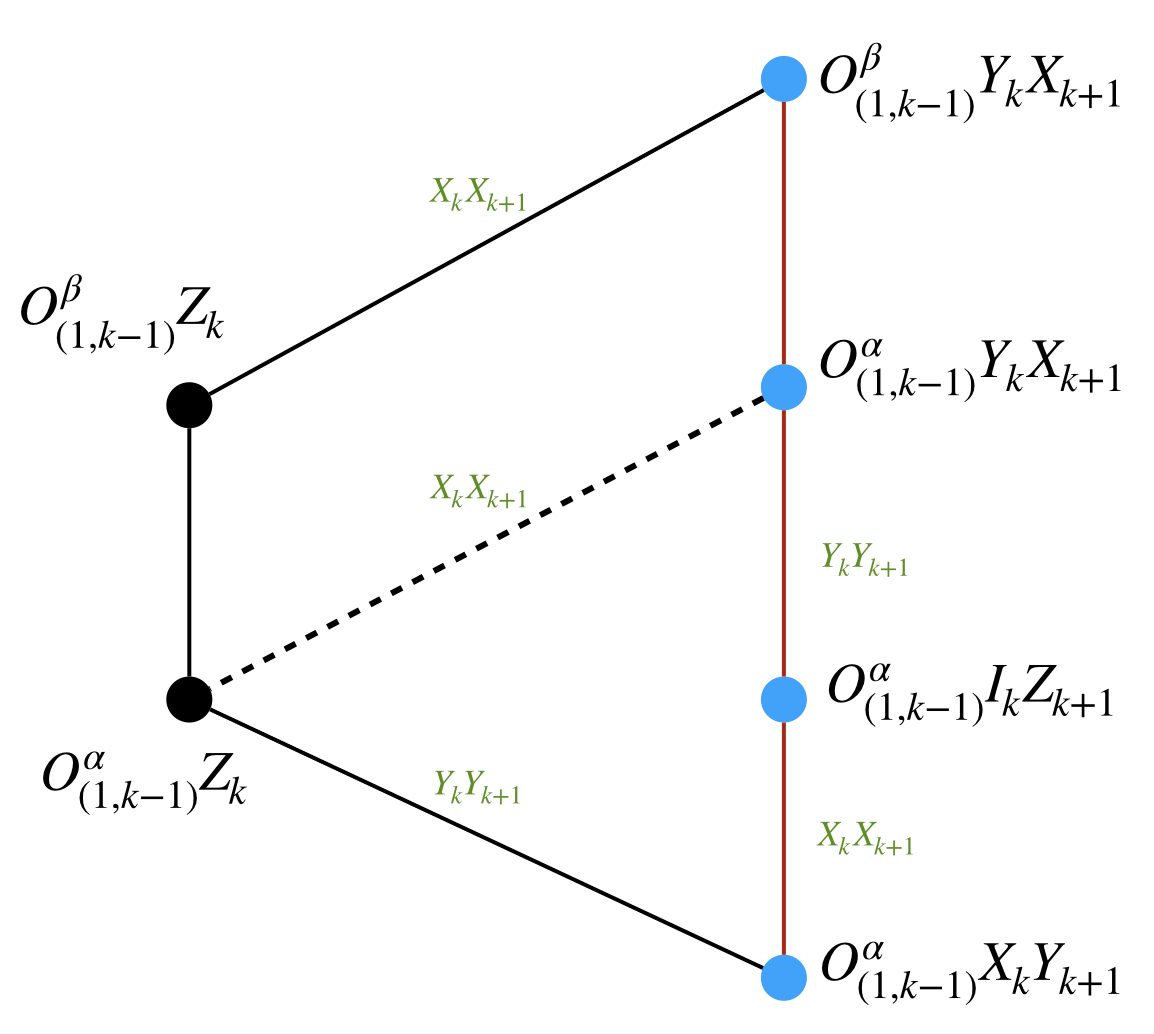} \label{decomposed}	}
	\caption{Four generation patterns connecting $G_{\lfloor k}$ with $G^1_{\lfloor k+1}$. The black vertices represent operators in $G_{\lfloor k}$ and blue vertices represent operators in $G^1_{\lfloor k+1}$. For the four patterns, we show that all blue vertices are connected within each pattern, which proves Lemma \ref{Ass2&3G1}.}
	\label{Gpatterns}
\end{figure*}

In Fig. \ref{Gpatterns}, we present the corresponding operators in $G^1_{\lfloor k+1}$ that can be generated by the four classes of adjacent vertices in $G_{\lfloor k}$. Note that for the case $3)$ where the adjacent vertices are $O^\alpha_{(1,k-1)}X_k$ and $O^\beta_{1,k-1}Z_k$, we further divide the case into two sub-cases: the edging operators that connecting operators in $G_{\lfloor k}$ and $G^1_{\lfloor k+1}$ are the same; the edging operators that connecting operators in $G_{\lfloor k}$ and $G^1_{\lfloor k+1}$ are different. Thus there are five different generation patterns as shown in Fig. \ref{Gpatterns}.

For pattern 1, see Fig. \ref{Smallsquare02}. Two operators $O^\alpha_{(1,k-1)}X_k$ and $O^\beta_{(1,k-1)}X_k$ are in the set $G_{\lfloor k}$ and are connected by the edging operator $\nu_m$. Here, $O^\alpha_{(1,k-1)}$ and $O^\beta_{(1,k-1)}$ represent two different operators on the first $k-1$ qubits. In $G^\alpha_{\lfloor k+1}$, we have two operators $O^\alpha_{(1,k-1)}Z_kY_{k+1}$ and $O^\beta_{(1,k-1)}Z_kY_{k+1}$ that are generated by operators $O^\alpha_{(1,k-1)}X_k$ and $O^\beta_{(1,k-1)}X_k$ in $G_{\lfloor k}$, respectively. Both edges are labeled by $Y_kY_{k+1}$. Since $\nu_m$ is an operator on two adjacent qubits within the first $k-1$ qubits while $Y_kY_{k+1}$ is on the $k$th and $(k+1)$th systems, $\nu_m$ and $Y_kY_{k+1}$ commute to each other. We prove that the generated operators $O^\alpha_{(1,k-1)}Z_kY_{k+1}$ and $O^\beta_{(1,k-1)}Z_kY_{k+1}$ are also connected by the edge labeled by $\nu_m$. Since $[O^\alpha_{(1,k-1)}X_k,\nu_m]=O^\beta_{(1,k-1)}X_k$, we have $[O^\alpha_{(1,k-1)},\nu_m]=O^\beta_{(1,k-1)}$ which yields that
\begin{equation}
\begin{split}
[O^\alpha_{(1,k-1)}Z_kY_{k+1},\nu_m]&=[O^\alpha_{(1,k-1)},\nu_m]\otimes Z_kY_{k+1}\\
&=O^\beta_{(1,k-1)}Z_kY_{k+1}.
\end{split}
\end{equation}
Thus, operators generated in pattern 1 are connected and $(k+1)$-finite. 

For pattern 2, see Fig. \ref{smallsquare}. Two operators $O^\alpha_{(1,k-2)}I_{k-1}X_k$ and $O^\beta_{(1,k-2)}Y_{k-1}Z_k$ are in the set $G_{\lfloor k}$ and are connected by the edge $Y_{k-1}Y_k$. Similar to pattern 1, the two generated operators $O^\alpha_{(1,k-2)}I_{k-1}Z_kY_{k+1}$ and $O^\beta_{(1,k-2)}Y_{k-1}X_kY_{k+1}$ in $G^1_{\lfloor k+1}$ are connected by the edge $Y_{k-1}Y_k$. Thus, operators generated in pattern 2 are connected and $(k+1)$-finite. 

For pattern 3, see Fig. \ref{LargeSquare}. Two operators in $G_{\lfloor k}$ have different ending operators $X_k$ and $Z_k$. Edges connecting them to operators in $G^1_{\lfloor k+1}$ have different labels $X_kX_{k+1}$ and $Y_kY_{k+1}$. For this case, we can always find a path that connects the two generated operators and all vertices in the path are in $G^1_{\lfloor k+1}$. In the case presented in Fig. \ref{LargeSquare}, the path is $O_{(1,k-2)}I_{k-1}Z_kY_{k+1}\rightarrow O_{(1,k-2)}Y_{k-1}X_kY_{k+1}\rightarrow O_{(1,k-2)}Y_{k-1}I_kZ_{k+1}\rightarrow O_{(1,k-2)}Y_{k-1}Y_kX_{k+1}$. Thus, operators generated in pattern 3 are connected and $(k+1)$-finite.

For pattern 4, see Fig. \ref{decomposed}. The two operators in $G_{\lfloor k}$ end with the $Z$ operator while edges leading them to operators in $G^1_{\lfloor k+1}$ are different. This pattern can be decomposed into a combination of pattern 1 and pattern 2. Thus the two generated operators in this pattern are connected and $(k+1)$-finite.

The above analysis shows that if two operators in $G_{\lfloor k+1}$ are adjacent to two operators in $G_{\lfloor k}$ which are adjacent, the two operators in $G_{\lfloor k+1}$ are connected. Moreover, there is a path in $G_{\lfloor k+1}$ that connects the two operators in $G_{\lfloor k+1}$.
\end{proof}

To facilitate the proof process, we decompose the operation $\lfloor\cdot\rceil$ on operators in $\Omega$ to a series of operations on the cell operators in $\mathcal{B}$. We have
\begin{equation}\label{APPlemma04}
\begin{split}
[O_a\otimes O_d,&O_b\otimes O_e]=\\
&[(O_a\otimes I)(I\otimes O_d),(O_b\otimes I)(I\otimes O_e)]\\
=&(O_a\otimes I)[(I\otimes O_d),(O_b\otimes I)](I\otimes O_e)\\
&+[(O_a\otimes I),(O_b\otimes I)](I\otimes O_d)(I\otimes O_e)\\
&+(O_b\otimes I)(O_a\otimes I)[(I\otimes O_d),(I\otimes O_e)]\\
&+(O_b\otimes I)[(O_a\otimes I),(I\otimes O_d)](I\otimes O_e)\\
=&[(O_a\otimes I),(O_b\otimes I)](I\otimes O_d)(I\otimes O_e)\\
&+(O_b\otimes I)(O_a\otimes I)[(I\otimes O_d),(I\otimes O_e)]\\
=&[O_a,O_b]\otimes O_dO_e+O_bO_a\otimes[O_d,O_e].
\end{split}
\end{equation}
Since $O_a, O_d\in\Omega$, which means they are tensor products of Pauli matrices, we have
\begin{equation*}
\begin{split}
&[\sigma_a,\sigma_b]\otimes \sigma_d\sigma_e+\sigma_b\sigma_a\otimes[\sigma_d,\sigma_e]=\\
&\begin{cases}
2\mi \varepsilon_{def} \sigma_b\sigma_a\otimes \sigma_f, & (a=0 \veebar b=0)\wedge (d\neq e \wedge d\neq 0 \wedge e\neq 0);\\
2\mi \varepsilon_{abc} \sigma_c\otimes \sigma_d\sigma_e, & (d=0 \veebar e=0)\wedge (a\neq b \wedge a\neq 0 \wedge b\neq 0);\\
2\mi \varepsilon_{def} I \otimes \sigma_f, & (a=b)\wedge (d\neq e);\\
2\mi \varepsilon_{abc}  \sigma_c \otimes I, & (a\neq b)\wedge (d= e);\\
0 & \text{otherwise}.
\end{cases}	
\end{split}
\end{equation*}
Here, $\varepsilon$ is the Levi-Civita symbol, $\wedge$ is the logical conjunction symbol and $ \veebar$ is the exclusive disjunction symbol. Then we have
\begin{equation}\label{OperationDecomp}
\begin{split}
&\lfloor \sigma_a\otimes \sigma_d,\sigma_b\otimes \sigma_e \rceil=\\
&\begin{cases}
\sigma_b\boxdot\sigma_a\otimes \lfloor \sigma_d,\sigma_e \rceil, &\\
 \qquad   \qquad \qquad \quad (a=0 \veebar b=0)\wedge (d\neq e \wedge d\neq 0 \wedge e\neq 0);& \\
\lfloor \sigma_a,\sigma_b \rceil\otimes \sigma_d\boxdot\sigma_e, &\\
 \qquad  \qquad \qquad \quad (d=0 \veebar e=0)\wedge (a\neq b \wedge a\neq 0 \wedge b\neq 0);& \\
\sigma_a\boxdot\sigma_b\otimes  \lfloor \sigma_d,\sigma_e \rceil, \qquad (a=b)\wedge (d\neq e);\\
\lfloor \sigma_a,\sigma_b \rceil \otimes \sigma_d\boxdot\sigma_e, \qquad (a\neq b)\wedge (d= e);\\
0 \qquad \qquad \text{otherwise}.
\end{cases}	
\end{split}
\end{equation}
The operator $\boxdot$ is defined as $A\boxdot B=O$ such that $O\in \Omega$ and $O\propto AB$. 
Equation \eqref{OperationDecomp} indicates that the operation $\lfloor \cdot,\cdot \rceil$ on $O_a\in \Omega$ and $O_b\in \Omega$ can be decomposed into first $\lfloor \cdot,\cdot \rceil$ and $\boxdot$ on cell operators in $\mathcal{B}$ and then $\otimes$ of the results. Then we find that the operations $\lfloor \cdot,\cdot \rceil$ and $\boxdot$ obtained by the decomposition only act on cell operators of the same position. Thus, some properties that apply on both operations $\lfloor \cdot,\cdot \rceil$ and $\boxdot$ on cell operators can also be generalized to the operation $\lfloor \cdot,\cdot \rceil$ on operators in $\Omega$.

Define $\bar{\digamma}_{\lfloor k}=\bar{\digamma}_{k}-\bar{\digamma}_{k-1}$. For the system with Hamiltonian given in \eqref{ExchangeDigamma}, we have $\bar{\digamma}_{\lfloor k+1}=\{X_kX_{k+1},Y_kY_{k+1}\}$. Since operators in $\bar{\digamma}_{\lfloor k+1}$ can only relate operators in $G_{\lfloor k}$ and $G^1_{\lfloor k+1}$, all of the operators in $G^1_{\lfloor k+1}$ are generated by operators in $G_{\lfloor k}$. Thus, to prove Assertion 2 and Assertion 3, it suffices to consider elements in $G_{\lfloor k}$. We assert that every element in $G_{\lfloor k}$ can generate at most two elements in $G^1_{\lfloor k+1}$ for a system whose Hamiltonian takes the form of \eqref{hamiltonianmodel}. For the case where an operator in $G_{\lfloor k}$ can generate two operators in $G^1_{\lfloor k+1}$, we present Lemma \ref{Ass2&3G101} to confirm that the generated operators in $G^1_{\lfloor k+1}$ are connected and $(k+1)$-finite. For the case where an operator in $G_{\lfloor k}$ only generates one operator in $G^1_{\lfloor k+1}$, we present Lemma \ref{Ass2&3G1} to confirm that the generated operators in $G^1_{\lfloor k+1}$ are connected and $(k+1)$-finite.



The idea for the proof of the following lemmas is to interpret the lemmas with regard to the cell operators and then the conclusion can be generalized to operators in $\Omega$. Fig. \ref{PauliCommute} and Fig. \ref{PauliMultip} demonstrate all of the possible cases of applying the operator $\lfloor \cdot,\cdot \rceil$ and $\boxdot$ to operators in $\mathcal{B}$. Note that the operator $I$ is not in the graph in Fig. \ref{PauliCommute} since applying $\lfloor \cdot,\cdot \rceil$ to $I$ always results in a zero matrix. It can be seen that starting from an arbitrary vertex in the graph, there is a path leading to any operator in $\{X, Y, Z\}$. It can be seen from Fig. \ref{PauliCommute} that applying the operator $\lfloor \cdot,\cdot \rceil$ to one Pauli operator with the same edging operator for an even time results in the original Pauli operator. Thus, one can reach the same ending vertex by removing an even number of the same edging operators from the path. A similar conclusion can be obtained from Fig. \ref{PauliMultip} for the operation $\boxdot$. Starting from an arbitrary vertex, one can reach any operator of interest in this graph and one can reach the same ending vertex by removing an even number of the same edging operators in $\{X, Y, Z\}$ from the path, even if these edges are not adjacent. For instance, starting from $Z$ at the right corner, the ending operators are $Y$ for the path $\{Z,(I, Y, X, Y),Y\}$, $\{Z,(I, X),Y\}$ and the path $\{Z,(I, Z, X, Z),Y\}$. Now we provide three lemmas that will be used for the proof of Assertion 2 of Proposition \ref{SubsetsLemma}.
\begin{figure}		
	\includegraphics[width=0.35\textwidth,center]{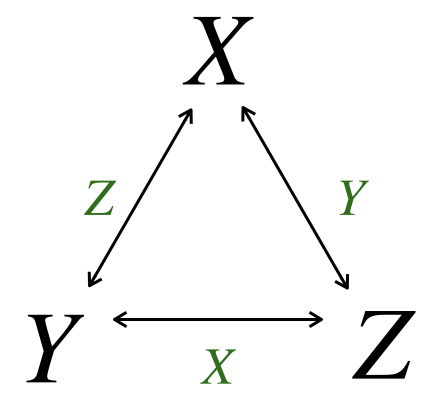}
	\caption{The application of $\lfloor \cdot,\cdot \rceil$ to operators in $\mathcal{B}$.}
	\label{PauliCommute}		
\end{figure}
\begin{figure}		
	\includegraphics[width=0.6\textwidth,center]{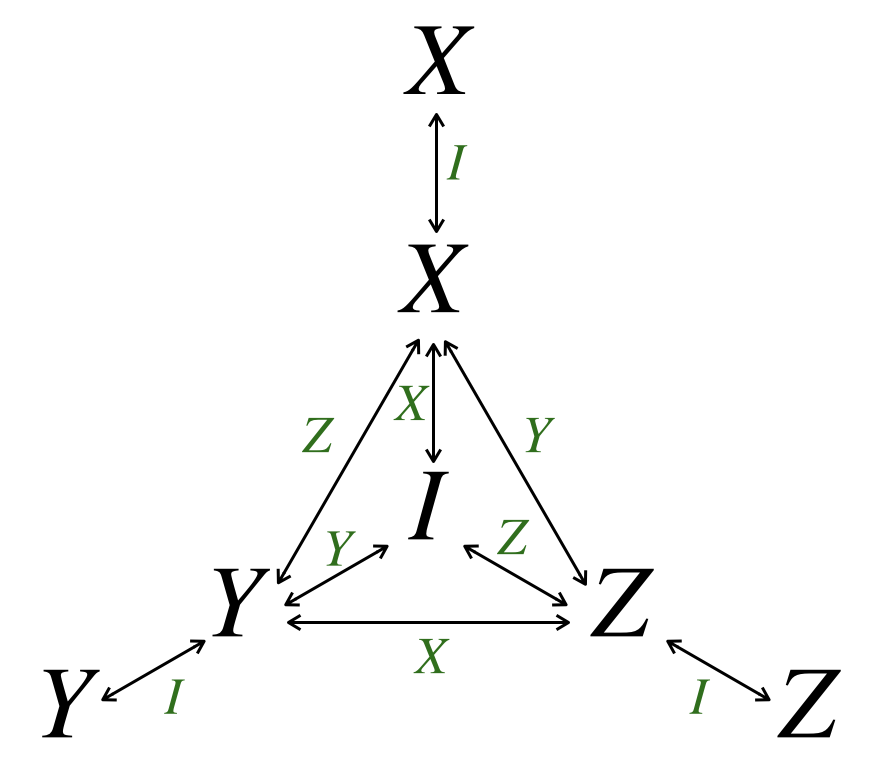}	
	\caption{The application of $\boxdot$ to a set of operators in $\mathcal{B}$.}
	\label{PauliMultip}		
\end{figure}
\begin{figure}		
	\includegraphics[width=0.35\textwidth,center]{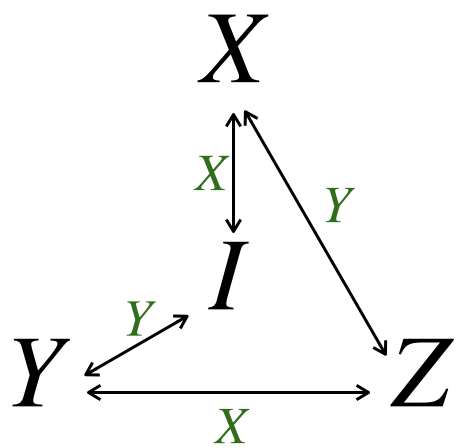}		
	\caption{The application of $\lfloor\cdot,\cdot\rceil$ and $\boxdot$ to operators in $\mathcal{B}$ while the edging operators are choosing from $\{X,Y\}$.}
	\label{PauliXY}		
\end{figure}

\begin{lemma}\label{sublemma2}
	Given $O_1\in \Omega$ and $\nu_1,\nu_2,\cdots,\nu_m \in \bar{\digamma}$ as in \eqref{ExchangeDigamma} where $m$ is any natural number. Define two edging sequences 
	\begin{equation}\label{edges}
	\begin{split}
	E_1&=(\nu_{\epsilon(1)},\nu_{\epsilon(2)},\cdots,\nu_{\epsilon(m)})\\
	E_2&=(\nu_{\epsilon'(1)},\nu_{\epsilon'(2)},\cdots,\nu_{\epsilon'(m)})
	\end{split}
	\end{equation}
	where $\epsilon(\cdot)$ and $\epsilon'(\cdot)$ are two choices of the bijective map 
	$f \coloneqq \{1, 2, \cdots, m \} \rightarrow \{1, 2, \cdots, m\}$.
	For the two paths $\{O_1,E_1,O_2\}$ and $\{O_1,E_2,O_3\}$, we have $O_2=O_3$ if $O_2,O_3\in \Omega$.
\end{lemma}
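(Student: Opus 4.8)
The idea is to pass to the binary (symplectic) encoding of Pauli strings: along any path each edging step is literally the \emph{addition} of the binary vector of the edging operator, so the vector of the endpoint equals the vector of $O_1$ plus the \emph{unordered} sum of the vectors of $\nu_1,\dots,\nu_m$, which is the same quantity for $E_1$ and for $E_2$. Concretely, send each cell operator to a pair in $\mathbb{F}_2^{2}$ by $I\mapsto(0,0)$, $X\mapsto(1,0)$, $Y\mapsto(1,1)$, $Z\mapsto(0,1)$; tensoring, an operator $O\in\Omega$ is sent to a vector $v(O)\in\mathbb{F}_2^{2N}$, and $v$ is a (cell-wise) bijection from $\Omega$ onto $\mathbb{F}_2^{2N}$. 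Extending $v$ to phase multiples of elements of $\Omega$ by ignoring the phase, the Pauli-group multiplication rule gives $v(O\nu)=v(O)+v(\nu)$ for all $O,\nu\in\Omega$, while $O$ and $\nu$ commute exactly when the symplectic form $\sum_{j}\bigl(x_j(O)z_j(\nu)+x_j(\nu)z_j(O)\bigr)$ is $0$ and anticommute otherwise.

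I would then analyse a single edging step. Since $O,\nu\in\Omega$ are tensor products of Pauli matrices, a cell-wise application of \eqref{PauliCommutaRelation} gives $[O,\nu]=0$ when $O,\nu$ commute and $[O,\nu]=2\,O\nu$ when they anticommute. Hence $\lfloor O,\nu\rceil\neq0$ precisely when $O$ and $\nu$ anticommute, and then $\lfloor O,\nu\rceil$ is the unique element of $\Omega$ proportional to $O\nu$, so $v\bigl(\lfloor O,\nu\rceil\bigr)=v(O)+v(\nu)$. In particular, the hypothesis $O_2\in\Omega$ says exactly that every intermediate operator on the path $\{O_1,E_1,O_2\}$ is nonzero, hence lies in $\Omega$ and anticommutes with the edging operator applied to it; likewise for $\{O_1,E_2,O_3\}$.

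The bookkeeping is then immediate. Writing $O_1=P^{(0)},P^{(1)},\dots,P^{(m)}=O_2$ for the vertices of the first path, the previous paragraph gives $v(P^{(l)})=v(P^{(l-1)})+v(\nu_{\epsilon(l)})$ for each $l$, hence $v(O_2)=v(O_1)+\sum_{l=1}^{m}v(\nu_{\epsilon(l)})=v(O_1)+\sum_{k=1}^{m}v(\nu_k)$, the last sum being over the index set $\{1,\dots,m\}$ and so blind to the bijection $\epsilon$. The identical computation for the second path gives $v(O_3)=v(O_1)+\sum_{k=1}^{m}v(\nu_k)$, so $v(O_2)=v(O_3)$; since $v$ is injective on $\Omega$, we conclude $O_2=O_3$.

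The main point needing care is the normalization: one must verify that $\lfloor\cdot,\cdot\rceil$ discards the scalar $2$ (and any $\pm\mi$ produced by the cell-wise products) so that $v(\lfloor O,\nu\rceil)=v(O)+v(\nu)$ holds on the nose, and that $v$ really is a bijection onto $\mathbb{F}_2^{2N}$ so that equal vectors force equal operators with no residual sign. I note that this argument uses nothing about the exchange form \eqref{ExchangeDigamma} of $\bar{\digamma}$ and that it recovers the remark made just before the lemma --- applying a fixed edging operator an even number of times restores the original operator, since its vector is added an even number of times and $2v(\nu)=0$ in $\mathbb{F}_2^{2N}$. A more elementary but considerably longer alternative, staying within the cell decomposition \eqref{OperationDecomp}, is to reduce a general reordering to adjacent transpositions, swap commuting edging operators via the Jacobi identity, and rule out swapping two non-commuting edging operators by showing it is incompatible with both orderings being nonzero.
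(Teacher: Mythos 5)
Your proof is correct, but it takes a genuinely different route from the paper's. The paper reduces the operation $\lfloor\cdot,\cdot\rceil$ on $\Omega$ to the cell-level operations via \eqref{OperationDecomp}, restricts to the cell graphs of Figs.~\ref{PauliCommute}--\ref{PauliXY} for edging operators in $\{X,Y\}$, checks by inspection that swapping two adjacent edging operators leaves the endpoint unchanged, and then realizes an arbitrary permutation as a composition of adjacent transpositions before lifting back to $\Omega$. You instead pass to the symplectic encoding $v:\Omega\to\mathbb{F}_2^{2N}$, observe that two Pauli strings either commute (zero commutator) or anticommute (so $[O,\nu]=2O\nu$ and $v(\lfloor O,\nu\rceil)=v(O)+v(\nu)$), and conclude that the endpoint of any existing path has vector $v(O_1)+\sum_k v(\nu_k)$, which is manifestly order-independent; injectivity of $v$ on $\Omega$ finishes the argument. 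Your approach buys three things: it is independent of the exchange form \eqref{ExchangeDigamma} of $\bar{\digamma}$, so the lemma holds for any decomposed Hamiltonian set; it computes each path's endpoint directly and therefore never has to verify (as the transposition argument implicitly must) that the intermediate vertices remain nonzero after each swap, only that the two given paths exist; and the phase bookkeeping is discharged once and for all by the definition of $\lfloor\cdot,\cdot\rceil$. What the paper's approach buys is self-containedness within the graph formalism it uses throughout the appendix, and the cell graphs it introduces are reused in the proofs of Lemmas~\ref{sublemma3} and~\ref{sublemma4}. The one point you flag as needing care --- that $\lfloor\cdot,\cdot\rceil$ discards the scalar so that $v(\lfloor O,\nu\rceil)=v(O)+v(\nu)$ exactly --- is indeed the crux, and it holds by Definition~\ref{CommuteOmega} since the unique element of $\Omega$ proportional to $2O\nu$ is the Pauli string underlying $O\nu$.
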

\begin{proof}
	We first prove that Lemma \ref{sublemma2} holds for $O_1\in \mathcal{B}$ and $\bar{\digamma}=\{X,Y\}$. Then we generalize the conclusion to the case where $O_1\in \Omega$  and $\bar{\digamma}$ is as given in \eqref{ExchangeDigamma}.
	
	Note that the graph in Fig. \ref{PauliCommute} is a subgraph of the graph in Fig. \ref{PauliMultip}, we only consider the graph in Fig. \ref{PauliMultip}. Since the cell set for $\bar{\digamma}$ is $\{X,Y\}$, the graph in Fig. \ref{PauliMultip} can be simplified to the graph in Fig. \ref{PauliXY} by deleting all of the edges labeled by $I$ and $Z$. 
	
	We first prove that for $O_1\in \mathcal{B}$, $E_1$ and $E_2$ defined in \eqref{edges} and $\nu_1,\nu_2,\cdots,\nu_m \in \{X,Y\}$, we have $O_2=O_3$ where $O_2$ and $O_3$ are the two ending vertices of the two paths $\{O_1,E_1,O_2\}$ and $\{O_1,E_2,O_3\}$ such that $O_2,O_3\in \Omega$. This can be verified by the graph in Fig. \ref{PauliXY}. Note that graph in Fig. \ref{PauliXY} holds for both the operations $\lfloor \cdot,\cdot\rceil$ and $\boxdot$.
	Starting from any vertex in the graph in Fig. \ref{PauliXY}, for two paths $\{O_1,(X,Y),O_2\}$ and $\{O_1,(Y,X),O_3\}$, we have $O_2=O_3$ which indicates that the order of adjacent edging operators $X$ and $Y$ can be changed while the same ending vertices can be obtained. 
	
	Note that all of the possible bijective mapping $f$ can be realized if the changing of two arbitrary adjacent edges is allowed. Thus we exclude the case $O_2=0$ and $O_3=0$ and assume that $O_2,O_3\in \Omega$. Since the two edging sequences $E_1$ and $E_2$ are sequences of cell operators from $\{X,Y\}$ and we already show that the order of any two adjacent cell operators in the sequences is changable without changing the ending vertex, we can change the order of elements in $E_2$ to the same as $E_1$ while the ending vertex remains the same. Thus, for the two paths $\{O_1,E_1,O_2\}$ and $\{O_1,E_2,O_3\}$ where $E_1$ and $E_2$ are defined as in \eqref{edges}, we have $O_2=O_3$ if $O_2,O_3\in \Omega$.
	
   Now we consider the case where $O_1\in \Omega$  and $\bar{\digamma}$ is given in \eqref{ExchangeDigamma}. From \eqref{APPlemma04} and the definition of the operator $\lfloor\cdot,\cdot \rceil$ in Proposition \ref{CommuteOmega}, we see that the operation $\lfloor\cdot,\cdot \rceil$ can be decomposed as operations $\lfloor\cdot,\cdot \rceil$ and $\boxdot$ on cell operators. The above analysis shows that for operations $\lfloor\cdot,\cdot \rceil$ and $\boxdot$ on cell operators, the order of adjacent edges can be replaced without changing the ending vertex, given that the ending vertex is in $\Omega$. Thus, one can conclude that changing the adjacent edges in a sequence of edges $\nu_1,\nu_2,\cdots,\nu_m \in \bar{\digamma}$ has no influence on the ending vertex on the path, given that the ending vertex is in $\Omega$. By changing adjacent edges a finite number of times, one can realize any given mapping $f$. Thus, Lemma \ref{sublemma2} is proved. 
\end{proof}

\begin{lemma}\label{sublemma3}
Let the edging sequence connecting $O_a\in \Omega$ and $O_b\in \Omega$ be denoted as $E$ and assume that $E\in S(\bar{\digamma}_{k+1})$ where $\bar{\digamma}_{k+1}$ is given in \eqref{ExchangeDigamma}. For an edging sequence $E'$ such that $C(E')=C(E)-C(E_p)$ where each element in $E_p\in S(\bar{\digamma}_{k+1})$ appears an even number of times, if there exists a path $\{O_a,E',O_c\}$ and $O_c\in\Omega$, then we have $O_a=O_c$.
\end{lemma}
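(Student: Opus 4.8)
The plan is to reduce the statement to a purely cell-operator computation and then reassemble, the operative hypothesis being that the remaining edging collection $C(E')=C(E)-C(E_p)$ consists of edges of \emph{even} multiplicity. The backbone of the reduction is \eqref{OperationDecomp}: applying some $\nu\in\bar{\digamma}_{k+1}$ (which is $X_jX_{j+1}$ or $Y_jY_{j+1}$) via $\lfloor\cdot,\cdot\rceil$ to an element of $\Omega$ amounts to performing, tensor slot by tensor slot, the single-qubit operations $\lfloor\cdot,\cdot\rceil$ and $\boxdot$ with edging letters drawn from $\{X,Y\}$, which are exactly the moves catalogued in Fig.~\ref{PauliXY}. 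So it suffices to understand finite words in the letters $X,Y$ acting on $\mathcal{B}$ under those moves, together with the fact that ``even multiplicity'' survives the slot-wise refinement: since an edge $X_jX_{j+1}$ (resp.\ $Y_jY_{j+1}$) deposits the \emph{same} letter on both of its slots, an even multiplicity of $\bar{\digamma}_{k+1}$-edges forces, in every individual slot, an even multiplicity of each cell letter occurring there.

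First I would isolate the single-letter fact --- already visible in Figs.~\ref{PauliCommute} and \ref{PauliMultip}, and recorded in the surrounding text --- that applying the same edging letter twice in succession returns the cell operator whenever the first application does not annihilate it; equivalently, $\lfloor\lfloor O,\nu\rceil,\nu\rceil=O$ for $O\in\Omega$, $\nu\in\bar{\digamma}_{k+1}$ with $\lfloor O,\nu\rceil\neq 0$. Next I would use Lemma~\ref{sublemma2} to permute the edging operators of $E'$ so that equal operators stand in adjacent pairs --- legitimate because the path being manipulated keeps its endpoint in $\Omega$. Then I would delete these adjacent pairs one at a time: removing a pair $(\nu,\nu)$ from a path destroys neither path-ness, since by the single-letter fact the vertex reached just before the pair is exactly the vertex reached just after it, nor the endpoint. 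Since every edge of $C(E')$ has even multiplicity, iterating the deletions exhausts the whole edging collection, leaving a path from $O_a$ with empty edging sequence; its endpoint is therefore $O_a$. A last appeal to Lemma~\ref{sublemma2} --- two edging sequences with the same collection issued from $O_a$ reach the same $\Omega$-valued endpoint --- identifies that endpoint with $O_c$, giving $O_a=O_c$.

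The step I expect to be the real obstacle is the interplay between the combinatorial reordering of Lemma~\ref{sublemma2} and the standing requirement that no intermediate vertex collapse to $0$: Lemma~\ref{sublemma2} is phrased only for genuine paths with $\Omega$-valued endpoints, so one cannot permute the whole word at once and simply assume the result is again a path. The safe route is incremental: use Lemma~\ref{sublemma2} only to bring a \emph{single} chosen pair of equal edges adjacent, cancel that pair immediately (which provably preserves path-ness, since the operator just before it is restored), and induct on the number of edges of $E'$. A secondary nuisance is that, after the slot-wise refinement, a given slot alternates between $\lfloor\cdot,\cdot\rceil$-type and $\boxdot$-type moves depending on its neighbouring slot, so the ``two identical letters in a row act trivially'' fact must be checked uniformly against Fig.~\ref{PauliXY} for both operations --- which is exactly what that figure displays --- rather than for $\lfloor\cdot,\cdot\rceil$ alone.
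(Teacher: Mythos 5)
The cancellation engine you use --- reorder with Lemma \ref{sublemma2}, then delete adjacent identical edging operators via the involution $\lfloor\lfloor O,\nu\rceil,\nu\rceil=O$ read off Fig.~\ref{PauliXY} --- is exactly the paper's, and your insistence on moving only one pair at a time before cancelling is actually more careful than the paper's blanket appeal to Lemma \ref{sublemma2}. The problem is that you have proved a different statement, because you transferred the parity hypothesis from $E_p$ to $E'$. The lemma assumes that each element of the \emph{removed} collection $C(E_p)$ occurs an even number of times; the surviving collection $C(E')=C(E)-C(E_p)$ carries no parity constraint. (In the one place the lemma is used, inside Lemma \ref{sublemma4}, $E'$ is what remains of $E$ after deleting the even-multiplicity operators $X_kX_{k+1}$ and $Y_kY_{k+1}$, and is generically not an even multiset.) Correspondingly, what the paper's own proof establishes, and what Lemma \ref{sublemma4} invokes as ``$O_b=O_c$'', is that the truncated sequence still connects $O_a$ to the \emph{original} endpoint $O_b$; the ``$O_a=O_c$'' in the printed statement is a typo, already refuted by taking $E_p$ empty. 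Your reinterpretation makes the printed conclusion come out true, but produces a lemma that cannot be applied where the paper needs it.

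The gap is recoverable with one extra step, which you do not take: your statement (an even-multiset edging sequence issuing from a vertex returns to that vertex) implies the needed one. By Lemma \ref{undirectedgraph} every edge is traversable in both directions, so reversing the path $\{O_a,E',O_c\}$ and concatenating it with $\{O_a,E,O_b\}$ yields a path from $O_c$ to $O_b$ whose edging collection is $C(E')+C(E)=2\,C(E')+C(E_p)$, an even multiset; your lemma then gives $O_c=O_b$. Equivalently, you could run your pairwise cancellation on $E$ itself rather than on $E'$: deleting an adjacent identical pair restores the vertex reached just before the pair and therefore leaves the terminal vertex $O_b$ unchanged, which after exhausting $E_p$ gives the path $\{O_a,E',O_b\}$ directly. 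Either patch turns your argument into a proof of the statement the paper actually relies on.
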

Here, the subtraction $A-B$ for two collections $A$ and $B$ is defined as removing all of the elements in $B$ from $A$. 
\begin{proof}
	 Note that we have $E\in\bar{\digamma}$, which indicates that the cell operators for $E$ are $X$ and $Y$. We first prove that Lemma \ref{sublemma3} holds under the case where $O_a,O_b\in \mathcal{B}$ and $E,E_p\in S(\{X,Y\})$ while other statements remains unchanged.
	 
	 Since $E\in S(\{X,Y\})$, which indicates that there are no edging operators $Z$ and $I$, we delete edges that are labeled by $Z$ and $I$ from the graphs in Fig. \ref{PauliCommute} and Fig. \ref{PauliMultip}. Thus, the two graphs can then be simplified to the graph in Fig. \ref{PauliXY}. We take the cell operator $X$ as an example and the case $Y$ is equivalent. Given any edging sequence $E$ such that the path $\{O_a,E,O_b\}$ exists. Note that vertices in a path may not be distinct. For example, there exist two paths $\{X,(X,Y),Y\}$ and $\{X,(X,Y,X,Y,X,Y),Y\}$ connecting the two vertices $X$ and $Y$. Note that the basis elements $X$ and $Y$ can be applied to any vertex in the graph which means that an arbitrary ordering of edging operators selected from $\{X,Y\}$ can be applied to an arbitrary vertex in the graph in Fig. \ref{PauliXY}. According to Lemma \ref{sublemma2}, we then can change the ordering ofthe edging operators in the sequence $E$ at will and the path still $\{O_a,E,O_b\}$ exists. Let $\tilde{E}$ denote another edging sequence that shares the same collection of the edging operators with $E$ but with a different order of the edging operators such that all of the edging operators $X$ are placed before the edging operators $Y$ in the sequence. The path $\{O_a,\tilde{E},O_b\}$ exists according to Lemma \ref{sublemma2}. For example, $\tilde{E}=(X,X,X,Y,Y,Y)$ if $E=(X,Y,X,Y,X,Y)$. It can be shown that we have $O_s=O_e$ if there exist two paths $\{O_s,\{X,X\},O_e\}$ and $\{O_s,\{Y,Y\},O_e\}$. Thus, removing any pair of adjacent operators $X$/$Y$ from the path, the remaining edging sequence can still connect the starting and ending vertices. Thus, for any edging sequence $E'$ such that $C(E')=C(E)-C(E_p)$ where the elements in $E_p\in S(\{X,Y\})$ appear an even number of times, the path $\{O_a,E',O_b\}$ still exists.
	 
	 From \eqref{OperationDecomp}, we see that operation $\lfloor \cdot,\cdot\rceil$ on operators in $\Omega$ can be decomposed into operations $\lfloor \cdot,\cdot\rceil$ and $\boxdot$ on operators in $\mathcal{B}$. The conclusion obtained on the later case can be generalized to the former case. Thus, Lemma \ref{sublemma3} is proved.

\end{proof}

\begin{lemma}\label{sublemma4}	
	Given $\bar{\digamma}_{k+1}$ as in \eqref{ExchangeDigamma}, $\bar{M}$ is a decomposed measurement set and the graph is $\mathbb{G}=\{G,\mathbb{E}\}$ where $G=f(\Omega,\bar{\digamma}_{k+1},\bar{M})$. If there exists a path $\{O_a,E,O_b\}$ where $O_a\in G_k$, $O_b\in G_{k+1}$ and $E=(\nu_1,\nu_2,\cdots)\in S(\bar{\digamma}_{k+1})$, and $O_b$ is not (k+1)-infinite, we have $O_b\in G_k$.	 
\end{lemma}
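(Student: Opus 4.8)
The plan is to show that a path leaving $G_k$ and ending on an operator that acts trivially on qubit $k+1$ can be rerouted so as to avoid the two generators that touch qubit $k+1$. Write $E=(\nu_1,\dots,\nu_m)$ and split its entries into those lying in $\bar{\digamma}_k$ and those lying in $\bar{\digamma}_{\lfloor k+1}=\{X_kX_{k+1},\,Y_kY_{k+1}\}$; say $X_kX_{k+1}$ occurs $a$ times and $Y_kY_{k+1}$ occurs $b$ times in $E$. By the cell-wise decomposition \eqref{OperationDecomp}, every $\bar{\digamma}_k$-edge leaves the cell operator on qubit $k+1$ untouched, whereas — in every case where it does not annihilate the operator — an $X_kX_{k+1}$-edge advances that cell operator along an $X$-labeled edge of the Pauli graphs of Figs. \ref{PauliCommute} and \ref{PauliMultip}, and a $Y_kY_{k+1}$-edge along a $Y$-labeled edge. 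Hence the cell operators occurring on qubit $k+1$ along the path trace a closed walk in that graph: it starts at $I$ because $O_a\in G_k$ is supported on the first $k$ qubits, it returns to $I$ because the hypothesis that $O_b$ is not $(k+1)$-infinite means the $(k+1)$-th cell operator of $O_b$ is the identity, and it uses exactly $a$ moves of label $X$ and $b$ moves of label $Y$.

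The heart of the argument is a parity count on this walk. Restricting to the labels $\{X,Y\}$ (Fig. \ref{PauliXY}), one reads off from an appropriate $2$-coloring of the vertex set $\{I,X,Y,Z\}$ that an $X$-move preserves that color while a $Y$-move flips it, and from a second coloring that the roles are exchanged. Since the walk begins and ends at $I$, both colors are restored, which forces $a$ and $b$ to be even — precisely the even-multiplicity situation to which Lemma \ref{sublemma3} applies.

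I would conclude as follows. By Lemma \ref{sublemma2} the endpoint of a path is determined by the multiset of its edges, so I reorder $E$ into $(E_p,E')$ with $E_p$ collecting all of the $\bar{\digamma}_{\lfloor k+1}$-edges at the front and $E'\in S(\bar{\digamma}_k)$, the endpoint still being $O_b$. Since $X_kX_{k+1}$ and $Y_kY_{k+1}$ each occur an even number of times in $E_p$, Lemma \ref{sublemma3} (deleting a collection of edges in which every operator appears an even number of times does not change the endpoint) shows that the $E_p$-block leaves $O_a$ fixed, so $\{O_a,E',O_b\}$ is itself a path with $E'\in S(\bar{\digamma}_k)$. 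Thus $O_b$ is produced from $O_a$ by finitely many applications of rule \eqref{NewRuleForAS} using only edges in $\bar{\digamma}_k$, and since $O_a\in G_k=f(\Omega,\bar{\digamma}_k,\bar{M})$ this gives $O_b\in G_k$.

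The main obstacle will be the cell-level bookkeeping: removing the $X_kX_{k+1}$ and $Y_kY_{k+1}$ edges also changes the cell operator on qubit $k$, so the surviving $\bar{\digamma}_k$-path is not the literal restriction of the original one, and endpoint-invariance has to be obtained through Lemmas \ref{sublemma2} and \ref{sublemma3} rather than by tracking the path step by step. One also has to check that the reordered and reduced paths actually stay nonzero throughout, which is where the hypothesis $O_b\in\Omega$ and the conditional form of those two lemmas are used; and as a preliminary one notes that $O_a\in G_k$ does act as the identity on qubit $k+1$, since $\bar{\digamma}_k$ and $\bar{M}$ are supported on the first $k$ qubits in the range where this lemma is invoked.
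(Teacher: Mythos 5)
Your proposal is correct and follows essentially the same route as the paper: track the $(k+1)$-th cell operator along the path to conclude that $X_kX_{k+1}$ and $Y_kY_{k+1}$ each occur an even number of times, then invoke Lemmas \ref{sublemma2} and \ref{sublemma3} to delete them and obtain a path $\{O_a,E',O_b\}$ with $E'\in S(\bar{\digamma}_k)$, whence $O_b\in G_k$. Your two-coloring argument on $\{I,X,Y,Z\}$ is a welcome explicit justification of the evenness of each count separately (which is what Lemma \ref{sublemma3} actually requires), a point the paper's proof only asserts by inspection of Fig.~\ref{PauliXY}.
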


\begin{proof}
	 Since $E\in S(\bar{\digamma}_{k+1})$ and the assumption that $O_b$ is not $(k+1)$-infinite, $O_b$ must be of the form 
	\begin{equation}\label{applemma604}
	i_j^k\in
	\begin{cases}
	\{0,1,2,3\}, \quad 1<j\leq k,\\
	\{0\},\qquad \quad j>k.
	\end{cases}
	\end{equation}
	The cell operators on the $(k+1)$th position for both $O_a$ and $O_b$ are $I$. This can only be achieved through the pattern in Fig. \ref{PauliMultip}, which is then simplified to Fig. \ref{PauliXY} for our case, but impossible for the pattern of Fig. \ref{PauliCommute}. In Fig. \ref{PauliXY}, it can be testified that there must be an even number of the edging operators $X$ and $Y$ in a path starting from the vertex $I$ and ending at $I$. It can then be generalized to the case that the edging operators $X_{k}X_{k+1}$ and $Y_{k}Y_{k+1}$ appear even times in $E$. 
	 
	 We obtain the collection $C_E'$ by removing all of the edging operators $X_{k}X_{k+1}$ and $Y_{k}Y_{k+1}$ from the collection $C(E)$.
	 

	  According to Lemma \ref{sublemma3}, we have $O_b=O_c$. Moreover, if for all of the edging sequences $E'$ such that $C(E')=C_E'$ and the triplet $\{O_a,E',O_c\}$, we have $O_c=0$. Then we have $O_b=0$ which contradicts the assumption that $O_b$ is on the path $\{O_a,E,O_b\}$. Then, the existence of the path $\{O_a,E',O_b\}$ can be confirmed. Thus, $O_b\in G_k$ since $E'\in S(\bar{\digamma}_k)$.
\end{proof}


We now move to the proof of Proposition \ref{SubsetsLemma} using the previous lemmas. 
\begin{proof}

The proof of Assertion 1 is straightforward. For $\forall 1\leq l\leq \mu$,
 given that $G_l=f(\Omega, \bar{\digamma}_l, \bar{M})$  and $G_\mu=f(\Omega, \bar{\digamma}_\mu, \bar{M})$, since $\bar{\digamma}_{\lfloor l}\subseteq \bar{\digamma}_{\lfloor \mu}$ we have $G_l\subseteq G_\mu$. 
 
 Then we prove Assertion 2 and Assertion 3 at the same time. Essentially, Assertion 2 and Assertion 3 together state that $G_{\lfloor k}$ is $k$-finite and, moreover, the graph generated by $G_{\lfloor k}$ is connected.

 We use the induction method to prove Assertion 2 and Assertion 3. Suppose that $G_{\lfloor i}$ is $i$-finite and is connected for $\forall 1\leq i\leq k$, we prove that $G_{\lfloor k+1}$ is $(k+1)$-finite and $G_{\lfloor k+1}$ is connected. According to Lemma \ref{connectedgraph01}, $G_{k+1}$ is connected. Since $G_{k+1}=G_k+G_{\lfloor k+1}$ and $G^2_{\lfloor k+1}\in G_{\lfloor k+1}$ is not adjacent to $G_k$, $G^2_{\lfloor k+1}$ must be connected to $G^1_{\lfloor k+1}$. Thus, to prove Assertion 3, it suffices to prove that both the induced subgraphs $\mathbb{G}^1_{\lfloor k+1}$ and $\mathbb{G}^2_{\lfloor k+1}$ are connected given that $\mathbb{G}_k$ is connected. To prove Assertion 2, we need to prove that operators in $G^1_{\lfloor k+1}$ and $G^2_{\lfloor k+1}$ are $(k+1)$-finite given that $G_k$ is $k$-finite.


Lemma \ref{Ass2&3G101} concerns the problem where one element operator in $G_{\lfloor k}$ generates two element operators in $G^1_{\lfloor k+1}$ and Lemma \ref{Ass2&3G1} concerns the problem where one element operator in $G_{\lfloor k}$ generates only one element operator in $G^1_{\lfloor k+1}$. For both cases, the generated operators are connected and $(k+1)$-finite. Then, from Lemma \ref{Ass2&3G101}, Lemma \ref{Ass2&3G1} and the assumption that Assertions 2 and 3 hold for $G^1_{\lfloor k}$, Assertions 2 and 3 hold for $G^1_{\lfloor k+1}$. Since $G^2_{\lfloor k+1}$ can be generated by $G^1_{\lfloor k+1}$, the set $G_{\lfloor k+1}$ is connected, which means Assertion 3 holds for $G_{\lfloor k+1}$. 

Having proved that Assertion 2 holds for $G^1_{\lfloor k+1}$, we prove that Assertion 2 also holds for $G^2_{\lfloor k+1}$. For $O_b\in G_{\lfloor k+1}$, there must be a path $\{O_a,E,O_b\}$ where $O_a\in G_k$ and all of the vertices in the path except $O_b$ are in $G_{\lfloor k}\cap G_{\lfloor {k+1}}$ since we have the assumption that $G_k$ is connected and previously proved that $G^1_{k+1}$ is connected. Then from Lemma \ref{sublemma4}, if $O_b$ is not $(k+1)$-finite, we have $O_b\in G_{k}$, which contradicts the assumption that $O_b\in G_{\lfloor k+1}$. Then we conclude that the set $G^2_{\lfloor k+1}$ is $(k+1)$-finite. Thus Assertion 2 is proved.

So far, we have proved that $G_{\lfloor k+1}$ is connected and the set $G_{\lfloor k+1}$ is $(k+1)$-finite, given that Assertion 2 and Assertion 3 apply to $G_{k}$. In our case, we assume that $\mathbb{G}_1=\mathbb{M}$ is connected. Thus, we can always find a $k$ which validating Assertion 2 and Assertion 3. Thus, using the induction method, Proposition \ref{SubsetsLemma} is proved.
\end{proof}
\end{appendix}



\end{document}